\newtheorem{theorem}{Theorem}[section]
\newtheorem{proposition}[theorem]{Proposition}
\newtheorem{corollary}[theorem]{Corollary}
\newtheorem{lemma}[theorem]{Lemma}
\theoremstyle{definition}
\newtheorem{definition}[theorem]{Definition}
\newtheorem{example}[theorem]{Example}
\newtheorem{remark}[theorem]{Remark}
\newcommand{\C}{\mathcal{C}}
\newcommand{\F}{\mathbb{F}}
\newcommand{\N}{\mathbb{N}}
\newcommand{\Q}{\mathbb{Q}}
\newcommand{\bo}{\mathcal O}
\newcommand{\M}{\mathcal M}
\newcommand{\ch}{\mathrm{Char}}
\title{Maximum weight spectrum codes}
\author[1]{Tim Alderson \thanks {The author acknowledges the support of the NSERC of Canada Discovery Grant program.}}
\author[2]{Alessandro Neri \thanks {The author acknowledges the support of  Swiss National Science Foundation grant n.169510.}}
\affil[1]{University of New Brunswick, Canada}
\affil[2]{University of Z{u}rich, Switzerland}
\begin{document}
\maketitle
\begin{abstract}
%Given a linear $[n,k]_q$ code $\C$, 
%that is a $k$-dimensional subspace of $\mathbb F_q^n$, 
%we study the set of non-zero distinct weights of the elements in $\mathcal C$, called \emph{weight set}. Its notion can be traced in many coding theory papers, because of its  relation with the weight distribution of the code \cite{}.
%In this work we  consider linear $[n,k]_q$ codes, that are $k$-dimensional $\F_q$-subspaces of $\mathbb F_q^n$. The set of non-zero distinct weights of a code is called \emph{weight set}, and its notion can be traced in many coding theory papers, because of its clear connection with the weight distribution of the code \cite{}.

 In the recent work \cite{shi18}, a combinatorial problem concerning linear codes over a finite field $\F_q$ was introduced. In that work the authors studied the weight set of an $[n,k]_q$ linear code, that is the set of non-zero distinct Hamming weights, showing that its cardinality is bounded above by $\frac{q^k-1}{q-1}$.  They showed that this bound was sharp in the case $ q=2 $, and in the case $ k=2 $. They conjectured  that the bound is sharp for every prime power $ q $ and every positive integer $ k $.  In this work quickly establish the truth of this conjecture. We provide two  proofs, each employing different construction techniques. The first relies on the geometric view of linear codes as systems of projective points. The second approach is purely algebraic. We establish some lower bounds on the length of codes that satisfy the conjecture, and the length of the new codes constructed here are discussed.

\end{abstract}

\section{Introduction}
In 1973 Delsarte studied the number of distinct distances for a code $\C$. In the linear case, this reduces to studying the number of distinct weights  of the given code \cite{delsarte1973four}. In that work he underlines the importance of this parameter, analyzing its connections with the number of distinct weights of the dual code, and the minimum distance of the code and the minimum distance of the dual. These four parameters are studied in order to obtain various results on the distance properties, and, in particular, they are used to calculate the weight distributions of cosets of a code. 

 Discussions on set of distinct weights of a code  can be traced in \cite{mac63}, where the author skirmishes with the following question. Given a set of positive integers, $ S $,  is it possible to construct a code whose set of non-zero weights is $ S $? Partial solutions are presented, and  necessary conditions are established.

Recently, in \cite{shi18} the authors studied a combinatorial problem concerning the number of distinct weights 
%(weight spectrum) 
of linear codes. For a  code of dimension $k$ over the finite field $\F_q$, they showed that the size of the weight set  is bounded above by $\theta_q(k-1)=\frac{q^k-1}{q-1}$. They proved this bound to be sharp for binary codes, and for all $ q $-ary codes of dimension $ k=2 $.  They conjectured that the bound is sharp for all $ q $ and $ k $. Codes meeting this bound are called maximum weight spectrum (MWS) codes.

In this work we first quickly establish the existence of MWS codes for all $ k,q $. We provide two different constructions of $[n,k]_q$ MWS codes. In section 2 we give a brief recap on linear and projective codes, and define the basic tools needed for our constructions.
In section 3 we give a short proof of the existence of MWS codes via a geometric construction. The construction is pleasingly simple, but provides codes of  ``large'' length. 
A different approach is taken in Section 4. The  construction presented there is inductive, for dimension $ k\ge 1 $, and relies on algebraic tools.
In section 5 we investigate lower bounds on the length of MWS codes. We provide a geometric construction of a new infinite family of ``shorter" MWS codes, and we determine  the asymptotic length of the codes arising from both our algebraic, and our geometric construction.
Finally in Section 6 we summarize our work, and discuss some remaining questions.

\section{Preliminaries}

\subsection{Linear codes}

Let $q$ be a prime power and $\F_q$ denote the finite field with $q$ elements.  Recall that there always exists $\alpha \in \F_q$ that is a primitive element, i.e. $\F_q\setminus \{0\}= \F_q^*=\langle\alpha\rangle$. Throughout what follows, $\alpha$ shall denote a primitive element of $\F_q$.

Let $n$ be a positive integer. The \textit{Hamming distance} between two elements $a,b \in \F_q^n$ is defined as
$$d_H(a,b)=|\{i \in \{1,\ldots, n\} \mid a_i \neq b_i \}|.$$
It is well-known that the Hamming distance induces a metric on $\F_q^n$. The \textit{Hamming weight} of a codeword $c\in \F_q^n$ is defined as
$$w(c)=d_H(c,0)=|\{i \mid c_i \neq 0 \}|. $$

\begin{definition}
For an $[n,k]_q$ code $\C$ we define the \emph{weight set of $\C$} as
$$w(\C)=\left\{w(c) \mid c \in \C\setminus \{0\}\right\}.$$
%The \textit{weight spectrum} of $ \C $ is $ |w(\C)| $.
\end{definition}

%Throughout the whole work we will denote by $\theta_q(k)=\frac{q^{k+1}-1}{q-1}$, for $q$ prime power and $k$ positive integer.

Given two vectors $a \in \F_q^{n_1}$, $b \in \F_q^{n_2}$ we will use the notation $( a \mid b)$ to denote the vector in $\F_q^{n_1+n_2}$ obtained by concatenating  $a$ and $b$, i.e.
$$(a\mid b)=(a_1,\ldots, a_{n_1}, b_1, \ldots, b_{n_2}),$$
where $a=(a_1,\ldots, a_{n_1})$ and $b=( b_1, \ldots, b_{n_2}).$

\begin{definition}
Let $0<k \leq n$ be two positive integers. An $[n,k]_q$ \emph{code} $\C$  is a $k$-dimensional subspace of $\F_q^n$ equipped with the Hamming distance. A \emph{generator matrix} $G$ for $\C$ is a $k\times n$ matrix over $\F_q$ whose row vectors generate $\C$. The \emph{minimum distance} $d$ of $\C$ is the quantity $d=\min\{d_H(u,v)\mid u,v \in \C, u\neq v\}$.

An $[n,k]_q$ code  $ \C $ of dimension $ k\ge 2 $ is said to be   \emph{non-degenerate} if no coordinate position is identically zero.
\end{definition}

Throughout,  by $[n,k,d]_q$ code we will denote an $[n,k]_q$ code $\C$ whose minimum distance is $d$. Moreover, unless specified otherwise, all codes discussed here are assumed to be non-degenerate.
%an element $a \in \F_q^n$ will be always considered as a row vector, i.e. $a=(a_1,\ldots, a_n)$. 

\begin{definition}
 Let $\mathcal G$ be the subgroup of the group of linear automorphisms of $\F_q^n$ generated by the permutations of coordinates and by the multiplication of the $i$-th coordinate by elements in $\F_q^*$. Two codes $\C$ and $\C'$ are said to be \emph{equivalent} if there exists $\sigma \in \mathcal G$ such that $\C'=\sigma(\C)$.
\end{definition}

%An $[n,k]_q$ code  $ \C $ of dimension $ k\ge 2 $ is said to be  \emph{non-degenerate}) if no coordinate position is identically zero. Unless specified otherwise, all linear codes discussed here are assumed to be non-degenerate. 

\begin{definition}
Let $\beta \in \F_q$ and $c\in \F_q^n$. We define the number
$$c[\beta]=\left| \left\{ i \in \{1,\ldots,n\} \mid c_i=\beta  \right\}\right|.$$
and the \emph{entries distribution vector for $c$} as
$$V(c):=(c[\alpha], c[\alpha^2], \ldots, c[\alpha^{q-1}],c[0]) \in \N^{q}. $$
\end{definition}

Some basic properties concerning the entries distribution vector $V(c)$ are presented in the following. The proofs follow readily from the respective definitions. 

\begin{proposition}\label{prop:Vc}
 Let $c \in \F_q^n$, $\beta \in \F_q^*$ and let $e \in \F_q^n$ be the vector whose entries are all equal to $1$. The following hold: 
 \begin{enumerate}
 \item $V(\beta c)=(c[\frac{\alpha}{\beta}],c[\frac{\alpha^2}{\beta}], \ldots, c[\frac{\alpha^{q-1}}{\beta}],c[0])$. In particular, since $\beta=\alpha^{j}$ for some $j$, then the vector consisting of the first $q-1$ entries of $V(\beta c)$ is the $j$-th shift of the vector formed by the first $q-1$ entries of $V(c)$.
 \item $V(c+\beta e)=(c[\alpha-\beta],c[\alpha^2-\beta], \ldots, c[\alpha^{q-1}-\beta], c[-\beta]) $ and therefore $V(c+\beta e)$ is a permutation of the vector $V(c)$.
 \item If $c=(a \mid b)$, then for every $\beta \in \F_q$, $c[\beta]=a[\beta]+b[\beta]$, i.e. $V(c)= V(a)+V(b)$. 
 \item $c[\beta]=n-w(c-\beta e)$ and $c[0]=n-w(c)$.
 \item $\langle V(c),e\rangle=n$, where $\langle\cdot,\cdot\rangle$ denotes the standard inner product.
 \end{enumerate} 
\end{proposition}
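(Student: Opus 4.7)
The plan is to handle each of the five items by a short counting argument, since every claim amounts to the definition of $c[\beta]$ composed with an appropriate bijection on either the coordinate set or $\F_q$.

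For part~(1), I would observe that the $i$-th entry of $\beta c$ equals $\gamma$ if and only if $c_i = \gamma/\beta$, so $(\beta c)[\gamma] = c[\gamma/\beta]$. Writing $\beta = \alpha^j$ and evaluating at $\gamma = \alpha, \alpha^2, \ldots, \alpha^{q-1}$, the map $\gamma \mapsto \gamma/\beta$ shifts exponents by $-j$ modulo $q-1$, which is exactly the cyclic shift claimed; the final slot $c[0]$ is unchanged since multiplication by a nonzero scalar preserves the zero coordinates. For part~(2), the same reasoning yields $(c+\beta e)[\gamma] = c[\gamma - \beta]$, and since $\gamma \mapsto \gamma - \beta$ is a bijection on $\F_q$, the tuple $V(c+\beta e)$ is merely a rearrangement of $V(c)$.

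For part~(3), I would split the index set of $(a\mid b)$ into the $a$-block $\{1,\ldots,n_1\}$ and the $b$-block $\{n_1+1,\ldots,n_1+n_2\}$; entries equal to $\beta$ are counted separately on each half, yielding $c[\beta] = a[\beta] + b[\beta]$ and hence $V(c) = V(a) + V(b)$. For part~(4), $c_i = \beta$ iff $(c - \beta e)_i = 0$, so $c[\beta]$ equals the number of zero coordinates of $c - \beta e$, which is $n - w(c - \beta e)$; specializing to $\beta = 0$ recovers $c[0] = n - w(c)$. For part~(5), summing $c[\beta]$ over all $\beta \in \F_q$ counts each of the $n$ coordinate positions of $c$ exactly once, so $\langle V(c), e \rangle = n$.

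No step here presents a real obstacle. The only place that warrants a sentence of care is the cyclic-shift reformulation in~(1), where one must explicitly invoke the identification $\F_q^* = \langle \alpha \rangle$ in order to translate multiplication by $\beta^{-1}$ into a shift of exponents modulo $q-1$; every other item is a one-line unpacking of definitions, exactly as the authors advertise.
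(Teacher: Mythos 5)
Your proposal is correct and matches the paper's approach exactly: the paper simply remarks that "the proofs follow readily from the respective definitions," and your item-by-item unpacking (the substitution $(\beta c)[\gamma]=c[\gamma/\beta]$, the translation $(c+\beta e)[\gamma]=c[\gamma-\beta]$, the block decomposition for concatenation, the identification $c[\beta]=(c-\beta e)[0]$, and the partition of the $n$ coordinates for the inner-product identity) is precisely the intended verification. No gaps.
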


%\begin{proof}
%\begin{enumerate}
% \item It directly follows from the fact that $(\beta c)_\ell=\alpha^i$ if and only if $c_\ell=\frac{\alpha^i}{\beta}$.
%\item $(c+\beta e)_\ell= \alpha^i$ if and only if $c_\ell=\alpha^i-\beta$.
%\item Trivial.
%\item $n=|\{i \mid c_i=0\}| + |\{ i \mid c_i \neq 0\}|=c[0]+w(c)$.
%\item $<V(c),e>=\sum\limits_{\beta \in \F_q^*}c[\beta]+ c[0]=|\{i \mid c_i=0\}| + |\{ i \mid c_i \neq 0\}|=n$.
%\end{enumerate}
%\end{proof}

\begin{definition}
Given an $[n,k]_q$ code $\C$ with generator matrix $G$ and $r=(r_1,\ldots, r_{q-1})\in \N^{q-1}$, we define the \emph{generalized $r$-repetition code of $\C$}
as the code $\C(r)$ whose generator matrix is 

\[
[\underbrace{\alpha G \mid \ldots  \mid
            \alpha G}_{r_1 \text{ times}}\mid \underbrace{\alpha^2 G \mid \ldots \mid \alpha^2 G}_{ r_2 \text{ times}} \mid \ldots \ldots \mid \underbrace{\alpha^{q-1} G \mid \ldots \mid \alpha^{q-1} G}_{ r_{q-1} \text{ times}} ]
\]
i.e.
$$\C(r)=\{c^r:=(\underbrace{\alpha c \mid \ldots  \mid
            \alpha c}_{r_1 \text{ times}}\mid \underbrace{\alpha^2 c \mid \ldots \mid \alpha^2 c}_{ r_2 \text{ times}} \mid \ldots \ldots \mid \underbrace{\alpha^{q-1} c \mid \ldots \mid \alpha^{q-1} c}_{ r_{q-1} \text{ times}}) \mid c \in \C  \}$$
\end{definition}            
            
The next result explains some properties of the code $\C(r)$.

\begin{proposition}\label{prop:Cr}
Let $\C(r)$ be the generalized $r$-repetition code of an $[n,k]_q$ code $\C$ for a non-zero vector $r=(r_1,\ldots, r_{q-1})\in \N^{q-1}$. Let moreover $R:=r_1+\ldots+r_{q-1}$. The following hold:
\begin{enumerate}
\item $\C(r)$ is an $[Rn,k]$ linear code over $\F_q$.
\item $w(c^r)=Rw(c)$. 
\item $c^r[0]=Rc[0]$.
\item $|w(\C)|=|w(\C(r))|$. In particular, if $\C$ is an MWS code, then also $\C(r)$ is an MWS code.
\item For every $i=1,\ldots,q-1$, 
$$c^r[\alpha^i]=\sum_{j=1}^{q-1}r_jc[\alpha^{i-j}].$$
\end{enumerate}

\end{proposition}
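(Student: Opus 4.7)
The plan is to verify the five items in order, leveraging the concatenation and scaling properties of the entries distribution vector recorded in Proposition \ref{prop:Vc}.

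For item (1), the given generator matrix has format $k \times Rn$. To see that its rank is $k$, restrict to any single block of $n$ consecutive columns; that block equals $\alpha^j G$ for some $j \in \{1,\ldots,q-1\}$, which has row rank $k$ since $\alpha^j \in \F_q^*$ and $\rk G = k$. Thus the $k$ rows of the full matrix are linearly independent, and $\C(r)$ has dimension $k$.

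For items (2) and (3), repeatedly apply Proposition \ref{prop:Vc}(3) along the concatenation to obtain $V(c^r) = \sum_{j=1}^{q-1} r_j\, V(\alpha^j c)$. Since $\alpha^j \neq 0$, scaling preserves both the weight and the number of zero entries, so $w(\alpha^j c) = w(c)$ and $(\alpha^j c)[0] = c[0]$. Summing the contributions from all $R$ blocks yields $w(c^r) = Rw(c)$ and $c^r[0] = Rc[0]$.

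For item (4), the map $\C \to \C(r)$ sending $c \mapsto c^r$ is $\F_q$-linear by the very definition of the generator matrix of $\C(r)$, and by item (2) it sends nonzero codewords to nonzero codewords; hence it is a bijection. Applying item (2) to every nonzero $c$ gives $w(\C(r)) = \{R\cdot w : w \in w(\C)\}$, and multiplication by $R \ge 1$ is injective on $\N$, so $|w(\C(r))| = |w(\C)|$. Since $\C(r)$ has the same dimension $k$ as $\C$ (item (1)), the Shi--Zhu bound $\theta_q(k-1)$ applies to both, and is attained by $\C(r)$ precisely when it is attained by $\C$.

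For item (5), by Proposition \ref{prop:Vc}(1), for each $j \in \{1,\ldots,q-1\}$ we have $(\alpha^j c)[\alpha^i] = c[\alpha^{i-j}]$, with exponents read modulo $q-1$. Combining this with Proposition \ref{prop:Vc}(3) across the $R$ concatenated blocks (of which $r_j$ are copies of $\alpha^j c$) yields
\[
c^r[\alpha^i] \;=\; \sum_{j=1}^{q-1} r_j\, (\alpha^j c)[\alpha^i] \;=\; \sum_{j=1}^{q-1} r_j\, c[\alpha^{i-j}],
\]
as claimed. None of the five points presents a serious obstacle; the only bookkeeping that needs care is in item (5), where one must correctly track how scaling by $\alpha^j$ cyclically permutes the first $q-1$ entries of the distribution vector.
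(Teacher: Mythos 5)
Your proof is correct and follows essentially the same route as the paper, which simply states that items 1--4 follow from the definition and that item 5 is a direct calculation from parts 1 and 3 of Proposition \ref{prop:Vc}; you have merely written out the details the paper leaves implicit. The bookkeeping in item 5, including the cyclic shift $(\alpha^j c)[\alpha^i]=c[\alpha^{i-j}]$, is handled correctly.
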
            
            
\begin{proof}
 \begin{itemize}
 \item[1-4.] Follow from the definition.
% \item For every $i=1,\ldots, q-1$, $ \alpha^i c_j \neq 0$ if and only if $c_j \neq 0$.
% \item Same argument as in 2.
% \item It is an immediate consequence of 2.
\item[5.] It is an easy calculation, that follows from part 1 and 3 of Proposition \ref{prop:Vc}.
\end{itemize}
\end{proof} 

\subsection{Projective systems}

In this section we introduce the geometric view of linear codes, as detailed in \cite{MR1186841} (or in \cite{AB3} for codes that are equivalent to linear).  We start with a short
overview of the fundamentals of finite projective geometry. For a detailed introduction we refer to the recent book by Ball \cite{Ball2015}. We let $PG(k,q)$ represent the finite
projective geometry of dimension $k$ and order $q$.  Due to a result of Veblen and Young
\cite{MR0179666}, all finite projective spaces of dimension greater than two are isomorphic up to
the order $q$. The space $PG(k,q)$ can be modelled most easily with the vector space of dimension
$k+1$ over the finite field $\F_q$.  In this model, the one-dimensional subspaces represent the
points, two-dimensional subspaces represent lines, etc. Formally, we have

$$ PG(k,q):= \left(\F_q^{k+1}\setminus \{0\}\right)/_\sim, $$
where
$$u\sim v \mbox{ if and only if } u=\lambda v \mbox{ for some } \lambda \in \F_q.$$

 Using this model, it is not hard to show
by elementary counting that the number of points of $PG(k,q)$ is given by \[\theta_q(k)=\frac{q^{k+1}-1}{q-1}.\] 
%Throughout the whole work $\theta_q(k)$ will always denote that quantity.

A \textit{$d$-flat} $\Pi$ in $PG(k,q)$ is a subspace isomorphic to $PG(d,q)$; if $d=k-1$, the subspace $\Pi$ is called a \textit{hyperplane}. 
%A property that we shall find useful is the principle of duality. For any
%space $S = PG(k,q)$, there is a \emph{dual} space $S^*$ whose points and hyperplanes are respectively the hyperplanes and points of $S$.  For any result about points of $S$, there is always a corresponding result about hyperplanes of $S^*$. More generally, for any result dealing with subspaces of $S$, replacing each reference to a subspace $PG(m,q)$, $m < k$, with a reference to the subspace $PG(k-m-1,q)$ yields a correspond \emph{dual} statement of $S^*$ that has the same truth value. For instance, a result about a set of points of $PG(k,q)$, no three of which are collinear, could be rewritten dually about a set of hyperplanes of $PG(k,q)$, no three of which meet in a common subspace of dimension $k-2$.
 Central to the geometric view of linear codes is the idea of a projective system. 

\begin{definition}
	A \emph{projective $[n,k,d]_q$-system} is a finite (multi)set $\M$ of points 
	of  $PG(k-1,q)$, not all of which lie in a hyperplane, where $n=|\M|$ , and $$d=n-\max\{ |\M\cap H| \mid H \subset PG(k-1,q), \dim(H)=k-2\}.$$ Note that the cardinalities above are counted with multiplicities in the case of a multiset. We denote by $ m(P) $ the multiplicity of the point $ P $ in $ \M $.  \\
	Two  \emph{projective $[n,k,d]$-system} $\M$ and $\M'$ are said to be \emph{equivalent} if there exists a projective isomorphism of $PG(k-1,q)$ mapping $\M$ to $\M'$.
\end{definition}

Let $ \C $  be an $ [n,k]_q $ code with $ k\times n $ generator matrix $ G $. Note that multiplying any column of $ G $ by a non-zero field element yields a generator matrix for a code which is equivalent to $ \C $. Consider the (multi)set of one-dimensional subspaces of $ \F_q^n $ spanned by the columns of $ G $. In this way the columns may be considered as a (multi)set $ \M $ of points of $ PG(k-1,q) $.  

For any non-zero vector $ v=(v_1,v_2,\ldots,v_k) $ in $ \F_q^k $, it follows that the projective hyperplane 
\[
v_1x_1+v_2x_2+\cdots + v_kx_k=0
\]
 contains $ |\M|-w $ points of $ \M $ if and only if the codeword $ vG $ has weight $ w $.  It follows that  linear (non-degenerate)  $ [n,k,d]_q $ codes and projective $ [n,k,d]_q $  systems are equivalent objects. That is to say, there exists a linear $ [n,k,d]_q $ code if and only if there exits a projective $ [n,k,d]_q $ system.

%
%Consider now a linear $[n,k,d]_q$ non-degenerate code $\C$ over $\F_q$, and its generator matrix $G$ of the form
%$$G=\left( v_1 \mid v_2 \mid \ldots \mid v_n\right)$$
%Then we can consider the projective system in $PG(k-1,q)$ whose points are given by the columns $v_i$'s of $G$. This projective system has the same parameter as the code, i.e.  it is a projective $[n,k,d]_q$-system, as the following result shows.
%
%\begin{theorem}\cite[Theorem 1.1.6]{MR1186841}
% Let $k\ge 1$ and $d\ge1$ be two positive integers. There is a 1-to-1 correspondence between the set of equivalence 
%classes of non-degenerate linear $[n,k,d]_q$ codes and the set of equivalence classes of projective $[n,k,d]_q$-systems. 
%\end{theorem}
%In fact there is a one-to-one correspondence between (equivalence classes of) $[n,k,d]$ non-degenerate codes over $\F_q$ and (equivalence classes of) projective $[n,k,d]$-systems (see \cite[Theorem 1.1.6]{MR1186841} and in general Section 1.1.2). 

\subsection{Maximum weight spectrum codes}

The weight set of a code has been studied in many contexts of coding theory, and for different purposes. In \cite{mac63}, using the relation between the weight distributions of a code and its dual, the author investigated necessary conditions for the existence of a linear binary code with a given weight set. One of the first to study the cardinality of the weight set of a code was Delsarte \cite{delsarte1973four}. He demonstrated its importance in computing the weight distributions of cosets of a code. 
Other problems concerning the weight set and its cardinality can be found in \cite{sl56, en87}.

Recently in \cite{shi18}, Shi \textit{et. al.}  investigated the maximum cardinality of the weight set of a code, showing the following upper bound.

\begin{proposition}\cite[Proposition 2]{shi18}\label{prop:boundMWS}
 If $\C$ is an $[n,k]_q$ code, then 
$$ |w(\C)| \leq \theta_q(k-1),$$
\end{proposition}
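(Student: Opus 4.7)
The plan is to exploit the $\F_q^*$-action on $\C\setminus\{0\}$ by scalar multiplication and observe that the Hamming weight is invariant under this action.

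First I would fix an $[n,k]_q$ code $\C$ and consider the group action of $\F_q^*$ on $\C\setminus\{0\}$ given by $(\lambda,c)\mapsto \lambda c$. Since $\C$ is a vector space and $\lambda\neq 0$, this is indeed a free action (the stabilizer of any nonzero $c$ is trivial, as $\lambda c = c$ forces $\lambda=1$). Therefore every orbit has size exactly $q-1$, and the number of orbits equals
\[
\frac{|\C\setminus\{0\}|}{q-1}=\frac{q^k-1}{q-1}=\theta_q(k-1).
\]

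Next I would observe that the Hamming weight is constant on each orbit: for any $\lambda\in\F_q^*$ and $c\in\C\setminus\{0\}$, we have $\supp(\lambda c)=\supp(c)$ and hence $w(\lambda c)=w(c)$. Consequently the weight function $w\colon \C\setminus\{0\}\to \N$ factors through the orbit space, and so the image $w(\C)$ has cardinality at most the number of orbits, namely $\theta_q(k-1)$.

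The geometric reformulation recorded in Section~2.2 makes this even more transparent: the orbits of the $\F_q^*$-action on $\C\setminus\{0\}$ correspond bijectively to the points of the projective space $PG(k-1,q)$ obtained from the $k$-dimensional ambient space $\C$, and this projective space has exactly $\theta_q(k-1)$ points. There is essentially no obstacle here — the argument is a one-line counting observation once one notices the scalar-invariance of the weight — and the whole content of the bound is the pigeonhole principle applied to the projection $\C\setminus\{0\}\to (\C\setminus\{0\})/\F_q^*$ followed by the weight map.
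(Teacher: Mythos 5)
Your argument is correct and is essentially the standard one: the paper itself does not reprove this statement (it cites \cite[Proposition~2]{shi18}), but the same counting via the free $\F_q^*$-action is implicit in the paper's remark that every non-zero entry of the weight distribution $A(\C)$ is a multiple of $q-1$, and in the identification of weights with hyperplane sections of the associated projective system. Nothing further is needed.
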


Motivated by Proposition \ref{prop:boundMWS}, we define a new family of codes.

\begin{definition}
 An $[n,k]_q$ code $\C$ such that $|w(\C)|=\theta_q(k-1)$ is called a \emph{maximum weight spectrum (MWS) code}.
\end{definition}

\begin{remark}
Observe that this definition is coherent with the existing literature. If  $\C$ is an $[n,k]_q$ code, then the \emph{weight spectrum} of $\C$ typically denotes the vector $A(\C)=(A_1,\ldots, A_n)$, where 
$$A_i=\left|\left\{c\in \C \mid w(c)=i \right\}\right|.$$
In this framework, the cardinality of the weight set of $\C$ coincides with the Hamming weight of the vector $A(\C)$, and  $\C$ is MWS if and only if the Hamming weight of $A(\C)$  is $\theta_q(k-1)$. Since every non-zero entry of $A(\C)$ is a multiple of $q-1$, $\theta_q(k-1)$ is actually the maximum possible value for the Hamming weight of $A(\C)$.
\end{remark}

In \cite{shi18} the authors conjectured, motivated by experimental results, that for every $q$ and $k$, MWS codes exist.In the following section, we quickly establish the truth of this conjecture.

\section{A geometric construction of MWS codes}

In this section we are going to give a geometric construction of $ [n,k]_q $ MWS codes for every prime power $q$ and every $k\ge 2$. 

Given an $[n,k,d]_q$ code $\C$, we can consider the associated projective $[n,k,d]_q$-system $\M(\C)$, whose points are given by the columns of the generator matrix. 
%By duality, we can also consider the corresponding family  $\M(\C)^*$ of hyperplanes in $\Pi:=PG(k-1,q)$. 

%\begin{definition}
%Let $\M$ be a projective system in $ \Pi=PG(k-1,q) $. We define the \emph{character} function of $ \M $, denoted $ \ch_{\M} $, mapping the hyperplanes of $ \Pi $ to the non-negative integers:
% \[
% \ch_\M(H)=\sum_{P\in H} m(P).
% \]
%So $\ch_{\M}(H) $ is  the number, including multiplicity, of points in $ \M \cap H$. 
%Dually, we define the character of a point $ P $  with respect to $ \M^*$, denoted $\ch_{\M^*}(P)$, to be the number, including multiplicity, of hyperplanes in $ \M^*$ that are incident with $ P $.
%\end{definition}

\begin{definition}
Let $\M$ be a multiset in $ \Pi=PG(k-1,q) $. We define the \emph{character} function of $ \M $, denoted $ \ch_{\M} $, mapping the power set of $ \Pi $ to the non-negative integers:
 \[
 \ch_\M(A)=\sum_{P\in A} m(P).
 \]
So $\ch_{\M}(A) $ is  the number, including multiplicity, of points in $ \M \cap A$. With a slight abuse of notation, we will write $ m(P)=\ch_{\M}(P) $, for any point $ P $.
%Dually, we define the character of a point $ P $  with respect to $ \M^*$, denoted $\ch_{\M^*}(P)$, to be the number, including multiplicity, of hyperplanes in $ \M^*$ that are incident with $ P $.
\end{definition}

The following follows directly from the definitions.

%\begin{lemma}
%Let $\C$ be an $[n,k]_q$ code over $\F_q$, and let $\M:=\M(\C)$ be its associated projective system of hyperplanes. There exists a codeword of weight $ s $ in $\C $ if and only if there exists a point  $P\in  \Pi $ with $\ch_\M(P)= n-s $.
%\end{lemma}

\begin{lemma}
Let $\C$ be an $[n,k]_q$ code over $\F_q$, and let $\M:=\M(\C)$ be its associated projective system. There exists a codeword of weight $ s $ in $\C $ if and only if there exists a hyperplane  $H$ in  $ \Pi $ with $\ch_\M(H)= n-s $.
\end{lemma}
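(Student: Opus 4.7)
The plan is to make fully explicit the correspondence between codewords and hyperplanes that was already sketched in Section~2.2. First I would fix a generator matrix $G = [G_1 \mid G_2 \mid \cdots \mid G_n]$ of $\C$. Since $\C$ is non-degenerate, each column $G_i$ is a non-zero vector in $\F_q^k$, and so determines a point $[G_i] \in PG(k-1,q)$; by the definition of $\M(\C)$, the multiset $\M$ is precisely $\{[G_1],\ldots,[G_n]\}$, so that for every point $P$ the multiplicity $m(P)$ counts the indices $i$ with $G_i$ spanning $P$.

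Next I would record the key identity. Every codeword can be written as $c = vG$ for some (unique) $v \in \F_q^k$, with $i$-th coordinate $c_i = v \cdot G_i$. Hence $c_i = 0$ if and only if the point $[G_i]$ lies on the projective hyperplane $H_v \subset PG(k-1,q)$ cut out by the linear form $v_1 x_1 + \cdots + v_k x_k = 0$. Summing over the columns, with multiplicities, this gives
\[
\ch_\M(H_v) \;=\; \bigl|\{\,i : c_i = 0\,\}\bigr| \;=\; n - w(c).
\]

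From this identity both directions of the lemma follow immediately. For the forward direction, if $c$ is a non-zero codeword of weight $s$, write $c = vG$ with $v \neq 0$; then the hyperplane $H_v$ satisfies $\ch_\M(H_v) = n-s$. For the converse, given a hyperplane $H$ with $\ch_\M(H) = n-s$, pick any non-zero $v \in \F_q^k$ such that $H = H_v$; then $c := vG$ lies in $\C$ and has weight $n - \ch_\M(H_v) = s$.

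I do not expect any substantive obstacle here: the lemma is essentially a re-packaging of the duality between columns of $G$ (points of $\M$) and linear forms on $\F_q^k$ (hyperplanes of $\Pi$). The only small sanity check worth flagging is that the construction is well-defined up to the ambiguity $v \mapsto \lambda v$ with $\lambda \in \F_q^*$: this rescales the codeword $vG$ by $\lambda$ (preserving its Hamming weight) and leaves the hyperplane $H_v$ (and hence $\ch_\M(H_v)$) unchanged.
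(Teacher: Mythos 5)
Your proposal is correct and is essentially the argument the paper itself relies on: the paper states the lemma as following "directly from the definitions," having already observed in Section~2.2 that the hyperplane $v_1x_1+\cdots+v_kx_k=0$ contains $|\M|-w$ points of $\M$ exactly when $vG$ has weight $w$. You have simply written out that correspondence in full, including the harmless $v\mapsto\lambda v$ ambiguity, so there is nothing to add.
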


A natural consequence of the above Lemma is the following result on the existence of MWS codes.

%\begin{lemma}\label{lem:GeometricConditions}
%There exists an $[n,k]_q$ MWS code $\C $  if and only if there exists a projective system of hyperplanes $\M$ in $ \Pi= PG(k-1,q) $ such that for any two points $ P\ne Q $ in $ \Pi $,  $\ch_\M(P) \ne $ $\ch_\M(Q)$ 
%\end{lemma}

\begin{lemma}\label{lem:GeometricConditions}
There exists an $[n,k]_q$ MWS code $\C $  if and only if there exists an $ [n,k,d]_q $ projective system  $\M$  such that $ \ch_{\M} $ is injective.   
\end{lemma}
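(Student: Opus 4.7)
The plan is to read this off as an almost immediate corollary of the preceding lemma. That lemma gives a precise dictionary: the nonzero weights occurring in $\C$ are exactly the integers of the form $n - \ch_\M(H)$ as $H$ ranges over the hyperplanes of $\Pi = PG(k-1,q)$. So
\[
w(\C) \;=\; \{\, n - \ch_\M(H) \;:\; H \text{ a hyperplane of } \Pi \,\}.
\]
The first key step is to count hyperplanes. By the standard duality of projective space (the hyperplanes of $PG(k-1,q)$ are parametrized by the one-dimensional subspaces of the dual of $\F_q^k$), the number of hyperplanes of $\Pi$ equals the number of points of $\Pi$, which is $\theta_q(k-1)$. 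Therefore the right-hand side above has at most $\theta_q(k-1)$ elements, with equality precisely when the assignment $H \mapsto \ch_\M(H)$ is injective on the set of hyperplanes.

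The second step is a small sanity check: the values $n - \ch_\M(H)$ produced this way are genuinely nonzero, so they legitimately belong to $w(\C)$. Indeed, $\ch_\M(H) = n$ would force every point of $\M$ (counted with multiplicity) to lie in $H$, contradicting the defining condition of a projective $[n,k,d]_q$-system that $\M$ not be contained in any hyperplane. Combining this with the count above yields both implications of the equivalence in one stroke: $\C$ is MWS iff $|w(\C)| = \theta_q(k-1)$ iff $\ch_\M$ separates hyperplanes.

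The only genuine obstacle is interpretive rather than mathematical: strictly speaking, $\ch_\M$ is defined on the whole power set of $\Pi$ and cannot be injective there (for instance, $\ch_\M(\emptyset) = 0$, and any singleton supported off $\M$ also has character zero). The statement must therefore be read as ``$\ch_\M$ restricted to the collection of hyperplanes of $\Pi$ is injective,'' and I would state this explicitly at the outset of the proof to avoid ambiguity. Beyond that convention, everything follows directly from the preceding lemma and the hyperplane count, with no further computation required.
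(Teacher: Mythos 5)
Your proof is correct and follows exactly the route the paper intends: the paper states this lemma without proof as ``a natural consequence of the above Lemma,'' and your argument---combining that lemma with the count of $\theta_q(k-1)$ hyperplanes and the observation that no hyperplane contains all of $\M$---is precisely the omitted verification. Your clarification that injectivity must be read as injectivity on the set of hyperplanes is a fair and worthwhile remark about the statement's phrasing.
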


We now provide a construction of a projective system as required in the Lemma \ref{lem:GeometricConditions}.
Let $ \Pi=PG(k-1,q) $, $ k\ge 2 $ and let the points of $ \Pi$ be denoted $ P_0,P_1,\ldots,P_{\theta_q(k-1)-1} $.  For each $ i $, include $ P_i $ in $ \M $  with multiplicity $ 2^i $.\\

%Let $ P\in \Pi $ (a point), by the character of $ P $, denoted Char$ (P) $ we denote the number of hyperplanes in the multiset incident with $ P $.\\

For $ k=2 $, $ \Pi $  is the projective line, so clearly no two points will have the same character. Consider $ k\ge 3 $. Each hyperplane in $ \Pi $ is incident with precisely $ \theta_q{(k-2)} $ distinct points, and simple counting shows that every pair of distinct hyperplanes are incident with precisely $ \theta_q{(k-3)} $ distinct points. For any particular hyperplane $ H $, let us suppose that $ H $ is incident with $ P_{i_1}, P_{i_2}, \ldots,P_{i_{\theta_q(k-2)}}  $. It follows that 
\[
\ch_\M(H) = \sum\limits_{j=1}^{\theta_q(k-2)}2^{i_j}.
\]  
It follows that  no two hyperplanes have the same character (consider the binary expansion of the respective characters). We have therefore proved the following result.

\begin{theorem} \label{thm:LongGeomExamples}
For each prime power $q$ and  $ k\ge 2 $, there exists an  $[n,k]_q$ MWS code, where $ n=2^{\theta_q(k-1)}-1 $. 
%In particular,  $ L(k,q)=\theta_q(k-1) $, $ k\ge 2 $.
\end{theorem}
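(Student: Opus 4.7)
The proof is essentially already sketched in the paragraph preceding the statement, so the plan is to assemble those observations into a clean argument, invoking Lemma \ref{lem:GeometricConditions} as the bridge between the combinatorial construction and the existence of an MWS code.

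First I would fix $\Pi = PG(k-1,q)$, enumerate its points as $P_0, P_1, \ldots, P_{\theta_q(k-1)-1}$, and define the multiset $\M$ by setting $m(P_i) = 2^i$. The length is then
\[
n = \sum_{i=0}^{\theta_q(k-1)-1} 2^i = 2^{\theta_q(k-1)} - 1,
\]
as required. Since every $P_i$ has positive multiplicity, the support of $\M$ spans $\Pi$, so $\M$ is a valid projective $[n,k,d]_q$-system for some $d$, and therefore corresponds (up to equivalence) to a non-degenerate $[n,k]_q$ code $\C$.

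Next I would handle the two cases separately. For $k=2$, $\Pi$ is the projective line and the hyperplanes are just the single points $P_i$, so $\ch_\M(P_i) = 2^i$ are pairwise distinct. For $k \geq 3$, the key point is that for any hyperplane $H$,
\[
\ch_\M(H) = \sum_{P_i \in H} 2^i,
\]
which is precisely the integer whose binary expansion has a $1$ in position $i$ exactly when $P_i \in H$. Since two distinct hyperplanes in $PG(k-1,q)$ are distinct subspaces and hence contain distinct sets of points (indeed, they meet in a $(k-3)$-flat of size $\theta_q(k-3) < \theta_q(k-2)$), the corresponding binary strings differ, and by uniqueness of binary expansions, the characters $\ch_\M(H)$ are pairwise distinct.

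Finally I would invoke Lemma \ref{lem:GeometricConditions}: since $\ch_\M$ is injective on hyperplanes, the code $\C$ associated to $\M$ is an $[n,k]_q$ MWS code. There is really no obstacle here — the only thing to be careful about is the distinctness of the point sets of two distinct hyperplanes in $PG(k-1,q)$, which is a standard fact about projective geometry and follows from the dimension formula for subspace intersections.
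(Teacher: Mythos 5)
Your proposal is correct and follows essentially the same route as the paper: the same multiset $m(P_i)=2^i$, the same split into the cases $k=2$ and $k\ge 3$, the same binary-expansion argument for injectivity of $\ch_\M$, and the same appeal to Lemma \ref{lem:GeometricConditions}. The only (welcome) additions are the explicit length computation and the remark that $\M$ spans $\Pi$, both of which the paper leaves implicit.
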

We note that the construction used in establishing the Theorem \ref{thm:LongGeomExamples} involves codes of considerable length (asymptotically).  A natural question is whether ``short''  MWS codes exist. We investigate this question in the sequel.

\section{An algebraic construction of MWS codes}\label{sec:alg}

In this section we give a different construction of MWS codes that relies on algebraic properties of linear codes. This construction is inductive, where the inductive step is divided in two parts. 

We now define two properties playing a central role in the next construction. 
\begin{equation}\label{propertyA} 
\text{\parbox{.90\textwidth}{ There exists  $ \beta \in \F_q^* $  such that, for $a,b\in \C$, $ a[\beta] = b[\beta] $ only if $ a=b$. }} \tag{A}
\end{equation}
\begin{equation} \label{propertyB}
\text{\parbox{.90\textwidth}{  $V(c)$  has pairwise distinct entries for every $c\in \C\setminus\{0\}$.}} \tag{B}
\end{equation}

\begin{proposition}\label{prop:recursive}
Let $q$ be a prime power, and let $\C$ be an $[n,k]_q$ MWS code. 
\begin{enumerate}
\item  \label{prop:recursive:pt1} If $ \C $  satisfies (\ref{propertyA}) then  then there exists an $[N,k+1]_q$ MWS code $\bar{\C}$, where $N=2n+1$.
\item  \label{prop:recursive:pt2}  Let  $q\geq 3$.  If  $\C$  satisfies both  (\ref{propertyA}) and  (\ref{propertyB}), then there exists an $[N,k+1]_q$ MWS code $\tilde{\C}$ which satisfies property  (\ref{propertyB}), where $N=(q-1)n+ (q-2)+(T+1)\frac{(q-2)(q-3)}{2}$
and $T=\max\{c[\beta] \mid c\in \C\setminus \{0\}, \beta \in \F_q^*\}$.
\end{enumerate}
\end{proposition}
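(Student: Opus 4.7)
The plan is to construct $\bar{\C}$ and $\tilde{\C}$ explicitly as extensions of the generator matrix of $\C$ by one row and a few columns, then verify MWS by a parity (or mod-$(q-1)$) argument that separates ``old'' weights---from codewords whose coefficient on the new row is $0$---from the ``new'' weights, the latter being provided in the required number by property (A).

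For part \ref{prop:recursive:pt1}, I would take
\[
\bar{G} := \begin{pmatrix} G & G & 0 \\ e & e & 1 \end{pmatrix},
\]
where $G$ is a generator matrix of $\C$ and $e \in \F_q^n$ is the all-ones vector. A generic codeword has the form $(c + be,\, c + be,\, b)$ for $c \in \C$, $b \in \F_q$, with weight $2w(c + be) + [b \neq 0]$. The $b = 0$ codewords contribute the $\theta_q(k-1)$ distinct \emph{even} weights $\{2w(c) : c \in \C\setminus\{0\}\}$. For $b \neq 0$, Proposition~\ref{prop:Vc}(4) rewrites the weight as $2n + 1 - 2c[-b]$, an \emph{odd} integer; specializing $b = -\beta$ for the distinguished element $\beta$ of (A), the injectivity of $c \mapsto c[\beta]$ on $\C$ produces $q^k$ distinct odd weights. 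Proposition~\ref{prop:boundMWS} then forces $|w(\bar{\C})| = \theta_q(k-1) + q^k = \theta_q(k)$, so $\bar{\C}$ is MWS.

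For part \ref{prop:recursive:pt2}, the natural backbone is the generalized repetition code $\C(r)$ with $r = (1, \ldots, 1)$: by Proposition~\ref{prop:Cr}(4) it is $[(q-1)n, k]$ MWS, and by Proposition~\ref{prop:Cr}(5) it satisfies $c^r[\alpha^i] = w(c)$ for every $i$, collapsing the first $q-1$ entries of $V(c^r)$ to a single value, so property (B) fails on $\C(r)$ alone. To repair (B), I would stack a new row on the generator of $\C(r)$ and append $\ell = (q-2) + (T+1)\binom{q-2}{2}$ extra columns, obtaining
\[
\tilde{G} := \begin{pmatrix} \alpha G & \alpha^2 G & \cdots & \alpha^{q-1} G & M \\ \alpha e & \alpha^2 e & \cdots & \alpha^{q-1} e & x \end{pmatrix}
\]
for a carefully designed $k \times \ell$ matrix $M$ and length-$\ell$ vector $x$. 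The suffix decomposes into $q-2$ ``singleton'' columns, one per unordered pair $\{\beta, \gamma\}$ of $\F_q^*$ containing the distinguished element $\beta$ (such pairs are separated cheaply thanks to property (A)), and $\binom{q-2}{2}$ blocks of $T+1$ columns, one per remaining pair, each providing a separation large enough to dominate the baseline multiplicities, which are bounded by $T$ by definition. The MWS bookkeeping mirrors part \ref{prop:recursive:pt1}: $b = 0$ codewords contribute $\theta_q(k-1)$ weights in one residue class mod $(q-1)$, and $b \neq 0$ codewords contribute $q^k$ new weights in a different residue class via property (A); property (B) on $\tilde{\C}$ then follows from the distinctness engineered into $M$ and $x$.

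The main obstacle is in part \ref{prop:recursive:pt2}: simultaneously achieving MWS, property (B), and the tight length $(q-1)n + (q-2) + (T+1)\binom{q-2}{2}$ requires juggling several combinatorial constraints on $(M, x)$, and the bookkeeping of entries distribution vectors via Propositions~\ref{prop:Vc} and~\ref{prop:Cr} concentrates most of the technical work.
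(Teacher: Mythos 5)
Your part~1 is correct and is essentially the paper's argument in light disguise: the paper takes $\bar{\C}=\langle \phi(\C),e\rangle$ with $\phi(c)=(c\mid 0\ldots 0)$ and separates the old weights ($\le n$) from the new ones ($\ge N-n=n+1$) by magnitude, whereas you duplicate $G$ and separate by parity. In both cases the $q^k$ new weights come from normalising the coefficient of the new generator to the distinguished $\beta$ of (A) and invoking injectivity of $c\mapsto c[\beta]$, so either works.

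Part~2, however, has a genuine gap, and the architecture you propose cannot deliver the statement. You never exhibit $M$ and $x$ --- you only list constraints they should satisfy and yourself flag their construction as ``the main obstacle'' --- and the backbone choice makes those constraints unsatisfiable. By part~5 of Proposition~\ref{prop:Cr}, a $b=0$ codeword $z=(\alpha c\mid\cdots\mid\alpha^{q-1}c\mid vM)$ has $z[\alpha^i]=w(c)+(vM)[\alpha^i]$ for every $i$, so restoring property (B) on the subcode $\{b=0\}$ forces the code generated by $M$ alone to have, for \emph{every} nonzero $v\in\F_q^k$, pairwise distinct values $(vM)[\alpha],\ldots,(vM)[\alpha^{q-1}]$ --- i.e.\ essentially property (B) again, for a $k$-dimensional code whose length $\ell=(q-2)+(T+1)\frac{(q-2)(q-3)}{2}$ does not grow with $n$ or $k$. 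For $q=3$ one has $\ell=1$, so $M$ is a single column $m\in\F_3^k$, and any nonzero $v$ with $vm=0$ (which exists whenever $k\ge 2$) gives $(vM)[1]=(vM)[2]=0$ and hence $z[1]=z[2]$, killing (B); the intended mod-$(q-1)$ weight separation also breaks, since $w(vM)$ is then not constant over $v$. The paper avoids all of this by leaving the old code untouched: it sets $\phi(c)=(c\mid 0^{N-n})$, so $V(\phi(c))=V(c)+(0,\ldots,0,N-n)$ and (B) for the $b=0$ codewords is inherited from $\C$ for free, while the entire extra length $(q-2)n+(q-2)+(T+1)\frac{(q-2)(q-3)}{2}$ is spent on the single new generator $x=(1^n,\alpha^{n+1},(\alpha^2)^{n+1+(T+1)},\ldots)$, whose block sizes $n+1+(i-1)(T+1)$ are spaced by more than $T$ precisely so that adding $V(c+\alpha^j e)$ (a permutation of $V(c)$, with nonzero-entry counts at most $T$) cannot create collisions. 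If you want to salvage your approach you would have to abandon the $\C(1,\ldots,1)$ backbone for the $b=0$ part and revert to something like the paper's zero-padding.
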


\begin{proof}
%\begin{enumerate}
%\item 
Let $t=\frac{q^k-1}{q-1}$ and let $1\leq w_1 < w_2 <\ldots < w_t \leq n$ be the distinct weights of the code $\C$. %We construct the code $\tilde{C}$ as follows.
For $ N>0 $ consider the embedding 
 $$\begin{array}{rcl}
\phi :\C & \longrightarrow &\F_q^N \\
 c & \longmapsto & (c \mid 0 \ldots 0).
 \end{array}$$ 
For part 1 take $ N=2n+1 $ and let $\bar{\C}$ be the $[N,k+1]_q$ code generated by $\phi(\C)$ and $ e $, where $e$ is the vector whose entries are all equal to $1$. 
%$q^k$ additional distinct weights. In fact, 
For every $c\in \C$ we have $w(\phi(c))=w(c)$, and $w(\beta e-c)=N-c[\beta]$.  
Since $ w(c)\le n <N-n\le N-c[\beta]$, property  (\ref{propertyA})  gives part 1.\\ 

For part 2 take $ N=(q-1)n+ (q-2)+(T+1)\frac{(q-2)(q-3)}{2} $, and take  the $[N,k+1]_q$ code $ \tilde{\C}$  generated by $\phi(\C)$ and $ x $, where $x$ is the vector defined as
$$x=(\underbrace{1,\ldots,1}_{n \text{ times }},\underbrace{\alpha,\ldots,\alpha}_{n+1 \text{ times }},\underbrace{\alpha^2,\ldots,\alpha^2}_{n+1 +(T+1) \text{ times }},\ldots,\underbrace{\alpha^{q-2},\ldots,\alpha^{q-2}}_{n+1+(q-3)(T+1) \text{ times }})$$
The proof that this code is MWS is analogous to part 1, so we shall show that  (\ref{propertyB}) is satisfied. For $c\in\C$,  
 $$V(\phi(c))=V(c)+(\underbrace{0, \dots, 0}_{q-1 \text{ times}}, N-n ).$$
 $ \C $ satisfies (b), and each entry in $V(c)$ is strictly less than $N-n$ ($q\geq 3$). It follows that the entries of $V(\phi(c))$ are pairwise distinct. Now, for some $\alpha^j \in \F_q^*$, and $c\in \C$ we consider the  entries of $V(z)$, where $z=\alpha^j x+\phi(c)$. We notice that
 $$\phi(c)+\alpha^j x=(c+\alpha^j e \mid \underbrace{\alpha^{j+1},\ldots,  
            \alpha^{j+1}}_{n+1 \text{ times}}\mid \ldots \mid \underbrace{\alpha^{j-1},\ldots,
            \alpha^{j-1}}_{n+1+(q-3)(T+1)  \text{ times}}) $$
Therefore, by part 3 of Proposition \ref{prop:Vc}, $V(z)=V(c+\alpha^j e)+V(y)$,
 where $$y[\alpha^j]=y[0]=0, y[\alpha^{j+i}]=n+1+(i-1)(T+1).$$
 
 By part 2 of Proposition \ref{prop:Vc}, $V(c+\alpha^j e)$ is  a permutation of the vector $V(c)$ and hence the entries are pairwise distinct. This gives $z[\alpha^{j+i}]=c[\alpha^{j+i}]+n+1+(i-1)(T+1)$, for $i=1,\ldots,q-2$, while $z[0]=c[-\alpha^j]$ and $z[\alpha^j]=c[0]$. The result follows, since $c[\alpha^{j+i}]<T+1$ for $i=1,\ldots,q-2$, and $c[-\alpha^j]\neq c[0]$.
%\end{enumerate}
\end{proof}

%\begin{remark}
% The construction given in part 2 of the above Proposition works only for $q\geq 3$. Fixing some details it is possible to generalize it such that it works also in the case $q=2$. However, since that case has already been proved in \cite[Theorem 1]{shi18}, we omit it.
%\end{remark}

We want to use this result for an inductive construction of $[n,k]_q$ MWS codes.
However, starting with an $[n,k]_q$ MWS code, this construction gives a new $[N,k+1]_q$ MWS code $\tilde{\C}$ that does not satisfy  (\ref{propertyA}). In fact, for every $j=1,\ldots, q-1$, if we take $z_1=\alpha^j x+\phi(c)$ and $z_2=\alpha^j x+\phi(\lambda c)$ as in the proof above, for some $1\neq\lambda \in \F_q^*$, then $z_1\neq z_2$, and 
$z_1[\alpha^j]=c[0]=(\lambda c)[0]=z_2[\alpha^j]$. 

Starting from this code $\tilde{\C}$, we need to construct another MWS code that satisfies  (\ref{propertyA}). This can be done using the generalized $r$-repetition code of $\C$ with a suitable vector $r$, as we will see in the following. We first need an auxiliary result.

\begin{lemma}\label{lem:hyp}
 Suppose $H \subseteq \Q^m$ is a finite union of affine hyperplanes. Then there exist a non-zero vector $z=(z_1,\ldots, z_m) \in \N^m$ such that $z \notin H$.
\end{lemma}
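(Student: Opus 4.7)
The plan is to construct $z$ as a point on a polynomial curve in $\N^m$, chosen so that it meets each hyperplane in only finitely many points, and then take the parameter large enough to avoid all of them.

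First, I would write the given set as $H = H_1 \cup \cdots \cup H_r$, where each $H_i$ is the zero set of a non-trivial affine equation $\ell_i(x) - c_i = a_{i,1}x_1 + \cdots + a_{i,m}x_m - c_i = 0$, with coefficient vector $(a_{i,1},\ldots,a_{i,m}) \in \Q^m$ non-zero (otherwise $H_i$ would not be a proper hyperplane). I would then consider the parametrized moment curve $\gamma : \N_{\geq 1} \to \N^m$ defined by $\gamma(t) := (t, t^2, \ldots, t^m)$; every value is a non-zero element of $\N^m$, so it suffices to exhibit some $t$ with $\gamma(t) \notin H$.

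For each $i$, the substitution yields the single-variable polynomial
\[ f_i(t) := \ell_i(\gamma(t)) - c_i = \sum_{j=1}^{m} a_{i,j}\, t^j - c_i \in \Q[t] \]
of degree at most $m$. The crucial observation is that $f_i$ is not the zero polynomial: since at least one coefficient $a_{i,j}$ is non-zero, the monomial $t^j$ appears with a non-zero coefficient. Hence $f_i$ has at most $m$ roots, and in aggregate $\gamma(t) \in H$ for at most $rm$ values of $t$. Any $t \in \N_{\geq 1}$ outside this finite bad set produces the required $z = \gamma(t)$.

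The argument is essentially routine; the only point worth highlighting is that the coordinate functions $t, t^2, \ldots, t^m$ of the moment curve are $\Q$-linearly independent, which is precisely what prevents any proper affine hyperplane from containing the whole curve. This reduces the affine-geometric avoidance problem to a one-variable polynomial-roots count, and no serious obstacle arises.
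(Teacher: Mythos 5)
Your proof is correct and follows essentially the same strategy as the paper: substitute a moment curve $(t,t^2,\ldots,t^m)$ (the paper uses $(1,t,\ldots,t^{m-1})$) into each defining equation, observe that each resulting univariate polynomial is non-zero, and pick $t$ avoiding the finitely many roots. The only cosmetic difference is that you track the affine constants $c_i$ explicitly and count roots directly, whereas the paper multiplies the polynomials together and notes the product cannot vanish on all of $\N$.
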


\begin{proof}
 Let $s \in \N$ and $H=\cup_{j=1}^s H_j$, where 
 $$H_j=\left\{ x \in \Q^m \mid \sum_{i=1}^m f_i^{(j)} x_i = 0 \right\}, $$
with $f_i^{(j)}$'s not all zeros.
 We take a vector of the form $v=(1,t, \ldots, t^{m-1})$ and show that it cannot be in $H$ for every $t\in \N$. In fact, $v \in H$ if and only if there exist an $\ell$ such that $f^{(\ell)}(t):=\sum_{i=1}^m f_i^{(\ell)}t^{i-1}  = 0$, if and only if 
 $$F(t):=\prod_{j=1}^s f^{(j)}(t)=0.$$
 Since $F(T)$ is a non-zero polynomial in $\Q[T]$ of degree at most $s(m-1)$, it can not vanish on the whole $\N$.
\end{proof}

We now wish to show $a^r[\alpha^i]\neq b^r[\alpha^i]$ for every $a,b \in \C$ such that $a\neq b$, for some $i=1,\ldots, q-1$. Using part 5 of Proposition \ref{prop:Cr}, this equates to showing 
$$\sum_{j=1}^{q-1}r_ja[\alpha^{i-j}]\neq \sum_{j=1}^{q-1}r_jb[\alpha^{i-j}],$$
or, equivalently,
$$\sum_{j=1}^{q-1}r_j\left(a[\alpha^{i-j}]-b[\alpha^{i-j}]\right) \neq 0.$$
This is equivalent to the condition

\begin{equation}\label{eq:inters}r=(r_1,\ldots, r_{q-1}) \notin \bigcup_{\substack{a,b \in \C \\ a\neq b}} H_i^{a,b},\tag{$\star$} \end{equation}

where
$$H_i^{a,b}:=\left(a[\alpha^{i-1}]-b[\alpha^{i-1}], \ldots,a[\alpha^{i}]-b[\alpha^{i}]\right)^{\perp}.$$

\begin{remark}\label{rem:all}
 Observe that here the choice of $i$ does not really matter. In fact, if for all $a,b\in \C$ with $a\neq b$ we have that $a[\alpha^i]\neq b[\alpha^i]$ for some $i$, then $a[\alpha^j]\neq b[\alpha^j]$ for every $j=1,\ldots, q-1$. By contradiction, if $a[\alpha^j]= b[\alpha^j]$ for some $j$ and some $a,b\in \C$ with $a\neq b$, then we would have
$$a[\alpha^i]=(\alpha^{j-i}a)[\alpha^j]=(\alpha^{j-i}b)[\alpha^j]=b[\alpha^i],$$
which is not possible.
\end{remark}

The following Lemma gives necessary and sufficient conditions for the existence of a vector  $r=(r_1,\ldots, r_{q-1}) \in \N^{q-1}$ that satisfies (\ref{eq:inters}).

\begin{lemma}\label{lem:exist}
 A vector $r=(r_1,\ldots, r_{q-1})$ that satisfies  (\ref{eq:inters}) exists if and only if for every $a,b \in \C$ with $a\neq b$ we have
$$V(a)\neq V(b).$$
 \end{lemma}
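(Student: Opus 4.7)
The plan is to prove both implications directly, with the forward direction being an unpacking of the definitions and the backward direction resting on Lemma \ref{lem:hyp}.

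For the forward implication, I would argue by contrapositive. Suppose $V(a)=V(b)$ for some distinct $a,b\in\C$. Then in particular $a[\alpha^k]=b[\alpha^k]$ for every $k=1,\ldots,q-1$, so every coordinate of the vector $\bigl(a[\alpha^{i-1}]-b[\alpha^{i-1}],\ldots,a[\alpha^i]-b[\alpha^i]\bigr)$ defining $H_i^{a,b}$ vanishes. Thus $H_i^{a,b}=\Q^{q-1}$, and any candidate $r$ automatically lies in $H_i^{a,b}$, so (\ref{eq:inters}) cannot hold.

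For the backward implication, the key observation is that since $c[0]=n-\sum_{k=1}^{q-1}c[\alpha^k]$, the statement $V(a)\neq V(b)$ is equivalent to $\bigl(a[\alpha^k]\bigr)_{k=1}^{q-1}\neq \bigl(b[\alpha^k]\bigr)_{k=1}^{q-1}$, i.e.\ some $a[\alpha^k]-b[\alpha^k]\neq 0$. Moreover, since $\alpha^0=\alpha^{q-1}$ in $\F_q^*$, the exponents $i-1,i-2,\ldots,i-(q-1)$ traverse every residue modulo $q-1$ exactly once, so the defining vector of $H_i^{a,b}$ is just a permutation of $\bigl(a[\alpha^k]-b[\alpha^k]\bigr)_{k=1}^{q-1}$ and is therefore non-zero. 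Hence each $H_i^{a,b}$ is a proper linear (in particular affine) hyperplane in $\Q^{q-1}$. Because $\C$ is finite there are only finitely many pairs $a\neq b\in\C$, so Lemma \ref{lem:hyp} applied to the finite union $\bigcup_{a\neq b}H_i^{a,b}$ produces a non-zero $r\in\N^{q-1}$ outside it, which is exactly the required vector.

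There is no real obstacle beyond these routine verifications; the only delicate point is making sure that the reindexing of coordinates does not accidentally collapse the normal vector of $H_i^{a,b}$ to zero, which is handled by the cyclic bijection of exponents modulo $q-1$ noted above.
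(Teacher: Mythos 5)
Your proof is correct and follows essentially the same route as the paper's: the forward direction observes that $V(a)=V(b)$ forces $H_i^{a,b}=\Q^{q-1}$, and the backward direction uses the constraint that the entries of $V(c)$ sum to $n$ to produce a nonzero normal vector for each $H_i^{a,b}$, then invokes Lemma \ref{lem:hyp} on the finite union. Your explicit remark that the defining vector of $H_i^{a,b}$ is a cyclic permutation of $\bigl(a[\alpha^k]-b[\alpha^k]\bigr)_{k=1}^{q-1}$ is a slightly more careful phrasing of a step the paper leaves implicit, but it is not a different argument.
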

 
 \begin{proof}
  Suppose $V(a)=V(b)$ for some $a\neq b$. Then the vector $\left(a[\alpha^{i-1}]-b[\alpha^{i-1}], \ldots,a[\alpha^{i}]-b[\alpha^{i}]\right)$ is the zero vector. This implies that
$$H_i^{a,b}=\left(a[\alpha^{i-1}]-b[\alpha^{i-1}], \ldots,a[\alpha^{i}]-b[\alpha^{i}]\right)^{\perp}=\Q^{q-1}$$
 and in (\ref{eq:inters}), no such $ r $ exists.

On the other hand, if $V(a)\neq V(b)$, then $V(a)$ and $V(b)$ differ in at least two entries. Hence there exists at least one $1\le i\le q-1$ such that $a[\alpha^i]-\beta[\alpha^i]\neq 0$, and the vector $(a[\alpha^{i-1}]-b[\alpha^{i-1}], \ldots,a[\alpha^{i}]-b[\alpha^{i}])$ is non-zero. 
%Since it holds for every $a,b \in \C$, with $a\neq b$, we have that
It follows that
$$ \bigcup_{\substack{a,b \in \C \\ a\neq b}} H_i^{a,b}$$
is a finite union of hyperplanes. By Lemma \ref{lem:hyp} there exists a solution for (\ref{eq:inters}) in $\N^{q-1}$.
 \end{proof}

Therefore, we want to impose the condition that $V(a)\neq V(b)$ for $a \neq b$.

\begin{lemma}\label{lem:lambda}
Let $\C$ be an $[n,k]_q$ MWS code. If $a, b \in \C\setminus \{0\}$ are such that $V(a)=V(b)$, then $a=\lambda b$ for some $\lambda \in \F_q^*$.
\end{lemma}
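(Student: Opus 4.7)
The plan is to prove this by a simple counting argument exploiting the extremality of MWS codes. The key observation is that in an MWS code, each weight class has exactly $q-1$ elements, so every two codewords of the same weight must be scalar multiples of each other.

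First, I would note that the non-zero codewords of $\C$ number $q^k-1$, while by hypothesis $|w(\C)| = \theta_q(k-1) = (q^k-1)/(q-1)$. The group $\F_q^*$ acts freely on $\C \setminus \{0\}$ by scalar multiplication, partitioning it into $(q^k-1)/(q-1)$ orbits of size $q-1$. Since scalar multiplication preserves the Hamming weight, each $\F_q^*$-orbit is contained in a single weight class. Because the number of orbits equals the number of weight classes, pigeonhole forces each weight class to consist of exactly one orbit. Equivalently, if $w(a) = w(b)$ for non-zero $a,b \in \C$, then $a = \lambda b$ for some $\lambda \in \F_q^*$.

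Next, using Proposition~\ref{prop:Vc} part 4 (or directly from the definition of $V$), the hypothesis $V(a) = V(b)$ entails $a[0] = b[0]$, hence $w(a) = n - a[0] = n - b[0] = w(b)$. Applying the previous paragraph yields $a = \lambda b$ for some $\lambda \in \F_q^*$, which is the claim.

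There is no real obstacle here: the argument is just a pigeonhole on the $\F_q^*$-orbits, and the step from $V(a) = V(b)$ to equal weights is immediate. The only thing worth flagging is that the statement excludes $a = 0$ (and $b=0$); otherwise $\lambda$ would not exist in $\F_q^*$. This is exactly why the hypothesis $a, b \in \C \setminus \{0\}$ is imposed.
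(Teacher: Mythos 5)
Your proof is correct and follows essentially the same route as the paper: the paper's (contrapositive) argument likewise rests on the fact that in an MWS code non-proportional codewords must have distinct weights, together with the observation that $V(c)$ determines $w(c)$ via $c[0]=n-w(c)$. You have merely made explicit the pigeonhole count on $\F_q^*$-orbits that the paper leaves implicit in the phrase ``since $\C$ is a MWS code, $w(a)\neq w(b)$.''
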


\begin{proof}
For $a, b \in \C\setminus\{0\}$, if $a\neq \lambda b$ for any $\lambda \in \F_q^*$, then, since $\C$ is a MWS code, $w(a)\neq w(b)$, and this implies $V(a)\neq V(b)$.
\end{proof}

\begin{corollary}\label{cor:exist1}
Let $\C$ be an $[n,k]_q$ MWS code that satisfies  (\ref{propertyB}). Then  (\ref{eq:inters}) has solution.
\end{corollary}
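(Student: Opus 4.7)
The plan is to invoke Lemma~\ref{lem:exist}, which reduces the claim to the following statement: for every pair of distinct codewords $a,b\in\C$, one has $V(a)\neq V(b)$. First I would dispose of the degenerate case where one of the two codewords is zero. Since $V(0)=(0,\ldots,0,n)$ while any nonzero $c\in\C$ has some entry $c[\alpha^{i}]>0$ (otherwise $c$ would have no nonzero coordinates, contradicting $c\ne0$), the entries distribution vectors automatically differ, so this case is immediate.

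So take nonzero $a,b\in\C$ with $V(a)=V(b)$, and aim to conclude $a=b$. Lemma~\ref{lem:lambda} immediately delivers $a=\lambda b$ for some $\lambda\in\F_q^{*}$, say $\lambda=\alpha^{j}$. By Proposition~\ref{prop:Vc}(1), the vector formed by the first $q-1$ entries of $V(a)=V(\alpha^{j}b)$ is precisely the $j$-fold cyclic shift of the vector formed by the first $q-1$ entries of $V(b)$. Since $V(a)=V(b)$, this tuple of $q-1$ entries must be invariant under a cyclic shift by $j$. But hypothesis (\ref{propertyB}) forces those entries to be pairwise distinct. A tuple of pairwise distinct values cannot be fixed by a nontrivial cyclic shift: the orbits of the shift by $j$ on $\{1,\ldots,q-1\}$ have common length $(q-1)/\gcd(j,q-1)$, and invariance would force all entries within an orbit to coincide, which is incompatible with distinctness unless every orbit has size one. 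Thus $j\equiv 0\pmod{q-1}$, hence $\lambda=1$ and $a=b$.

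Having verified that $V$ separates the codewords of $\C$, Lemma~\ref{lem:exist} (which in turn relies on Lemma~\ref{lem:hyp}) produces a vector $r\in\N^{q-1}$ satisfying (\ref{eq:inters}), completing the proof. The only nontrivial step is the cyclic-shift observation, which is really an elementary orbit-counting remark; I do not foresee any genuine obstacle, since hypothesis (\ref{propertyB}) together with Lemmas~\ref{lem:exist} and~\ref{lem:lambda} carry almost all of the weight.
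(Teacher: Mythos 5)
Your proof is correct and follows essentially the same route as the paper: reduce via Lemma~\ref{lem:exist} to the injectivity of $V$ on $\C$, invoke Lemma~\ref{lem:lambda}, and rule out $\lambda\neq 1$ using the cyclic-shift property from Proposition~\ref{prop:Vc}(1) together with (\ref{propertyB}). The only difference is that you spell out the shift/orbit argument and the zero-codeword case, both of which the paper's one-line proof leaves implicit.
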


\begin{proof}
This means that $V(c)=V(\lambda c)$ if and only if $\lambda =1$. Moreover by Lemma \ref{lem:lambda} this implies $V(a)\neq V(b)$ for all distinct $a,b \in \C\setminus \{0\}$. Then we conclude by Lemma \ref{lem:exist}.
\end{proof} 

Now, we also want that property  (\ref{propertyB})  is preserved when we extend the code to the generalized $r$-repetition code $\C(r)$. This means, for every $c^r\in \C(r)\setminus\{0\}$,
$c^r[\alpha^i]\neq c^r[\alpha^\ell]$
for every $i<\ell$, and moreover $c^r[\alpha^\ell]\neq c^r[0]$ i.e.
%$$ \sum_{j=1}^{q-1}r_jc(\alpha^{i-j})\neq \sum_{j=1}^{q-1}r_jc(\alpha^{\ell-j})$$
$$ \sum_{j=1}^{q-1}r_j\left(c[\alpha^{i-j}]-c[\alpha^{\ell-j}]\right) \neq 0 \qquad \mbox{ and } \qquad \sum_{j=1}^{q-1}r_j\left(c[\alpha^{\ell-j}]-c[0]\right) \neq 0. $$
This condition can be reformulated as
$$(r_1,\ldots, r_{q-1}) \notin \bigcup_{i=0}^{q-1}\bigcup_{\ell=i+1}^{q-1} H_{i,\ell}^c, $$
where 

$$H_{i,\ell}^c:=\left(c[\alpha^{i-1}]-c[\alpha^{\ell-1}], \ldots,c[\alpha^{i}]-c[\alpha^{\ell}]\right)^{\perp}$$
for $1\leq i < \ell \leq q-1$, and
$$H_{0,\ell}^c:=\left(c[\alpha^{\ell-1}]-c[0], \ldots,c[\alpha^{\ell}]-c[0]\right)^{\perp}.$$
If we do this for every $c\in \C\setminus\{0\}$,  we get
\begin{equation}\label{eq:2}
r=(r_1,\ldots, r_{q-1}) \notin \bigcup_{c\in \C\setminus\{0\}} \bigcup_{i=0}^{q-1}\bigcup_{\ell=i+1}^{q-1} H_{i,\ell}^c.\tag{$\star\star$}
\end{equation}
As we did for  (\ref{eq:inters}), we now find conditions such that  (\ref{eq:2}) has solutions.
\begin{lemma}\label{lem:exist2}
 If  (\ref{propertyB}) holds, then  there exists a vector $ r $ satisfying (\ref{eq:2}).
\end{lemma}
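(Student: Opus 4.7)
The plan is to reduce the statement directly to Lemma \ref{lem:hyp}. First I would note that the set on the right-hand side of (\ref{eq:2}) is a union over $c \in \C\setminus\{0\}$ and pairs $0 \le i < \ell \le q-1$, so it is a \emph{finite} union of subspaces of $\Q^{q-1}$. Each subspace $H_{i,\ell}^c$ is defined as the orthogonal complement of an explicit vector; as long as that vector is non-zero, $H_{i,\ell}^c$ is a proper (linear) hyperplane and Lemma \ref{lem:hyp} immediately produces a non-zero $r \in \N^{q-1}$ outside the union, which is exactly what is required.

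The remaining task — and the only place where property (B) is used — is to check that each defining vector is genuinely non-zero. In the case $1 \le i < \ell \le q-1$, the $j$-th entry of the defining vector of $H_{i,\ell}^c$ is $c[\alpha^{i-j}] - c[\alpha^{\ell-j}]$. Since $i \not\equiv \ell \pmod{q-1}$, the field elements $\alpha^{i-j}$ and $\alpha^{\ell-j}$ are two distinct elements of $\F_q^*$, so by property (B) applied to $c \neq 0$ the corresponding entries of $V(c)$ differ and the coordinate is non-zero. In the case $i=0$, $1\le\ell\le q-1$, the $j$-th entry is $c[\alpha^{\ell-j}] - c[0]$, and (B) guarantees that the entry of $V(c)$ at $\alpha^{\ell-j}$ differs from the entry at $0$, so the coordinate is again non-zero. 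Hence every $H_{i,\ell}^c$ in the union is a proper hyperplane, and Lemma \ref{lem:hyp} concludes.

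The only real obstacle is this last verification, and even there the calculation is mechanical once one carefully identifies which pair of entries of $V(c)$ is being compared in each coordinate. Everything else is a straightforward appeal to the already-established Lemma \ref{lem:hyp}.
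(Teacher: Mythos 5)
Your proof is correct and follows essentially the same route as the paper: the paper's proof simply asserts that property (\ref{propertyB}) makes each $H_{i,\ell}^c$ a genuine hyperplane and then invokes Lemma \ref{lem:hyp}, while you carry out the (routine) verification that each defining vector is non-zero. No discrepancy to report.
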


\begin{proof}
 If $V(c)$ has all distinct elements it is clear that $H_{i,\ell}^c$ is an hyperplane for every $i<\ell$. Therefore, we conclude using Lemma \ref{lem:hyp}.
\end{proof}

The following theorem summarizes this part on the generalized $r$-repetition code $\C(r)$, fundamental tool for this construction of MWS codes.

\begin{theorem}\label{thm:cr}
Suppose $\C$  is an $[n,k]_q$ MWS code that satisfies property  (\ref{propertyB}). Then there exists $r=(r_1,\ldots,r_{q-1})\in \N^{q-1}$ such that $\C(r)$ is an $[Rn,k]_q$ MWS code that satisfies (\ref{propertyA}) and (\ref{propertyB}).
%\begin{enumerate}
% \item for every  $a^r,b^r \in \C(r)$ and for every $i=1,\ldots, q-1$, $a^r[\alpha^i]\neq b^r[\alpha^i]$,
%\item $\C(r)$ is an $[Rn,k]_q$ MWS code,
%\item for every $c^r \in  \C(r)$, $V(c^r)$ has pairwise distinct entries.
%\end{enumerate}
\end{theorem}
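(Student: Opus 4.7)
The plan is to observe that the two requirements on $r$ have already been translated by the preceding lemmas into avoidance of finite unions of affine hyperplanes in $\Q^{q-1}$, and to show that a single $r \in \N^{q-1}$ can avoid both simultaneously. Fix some $i \in \{1, \ldots, q-1\}$; by Remark \ref{rem:all} the choice is immaterial for property (A). Consider the combined exceptional set
$$\mathcal{H} := \Big(\bigcup_{\substack{a,b \in \C \\ a \neq b}} H_i^{a,b}\Big) \;\cup\; \Big(\bigcup_{c \in \C \setminus \{0\}} \bigcup_{0 \leq i' < \ell \leq q-1} H_{i',\ell}^c\Big).$$

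Since $\C$ is an MWS code satisfying (\ref{propertyB}), Lemma \ref{lem:lambda} together with Corollary \ref{cor:exist1} ensures that for distinct $a, b \in \C$ we have $V(a) \neq V(b)$, so the defining vector of each $H_i^{a,b}$ is non-zero and the hyperplane is proper. Likewise, Lemma \ref{lem:exist2} shows that each $H_{i',\ell}^c$ is a proper affine hyperplane. Hence $\mathcal{H}$ is a finite union of affine hyperplanes in $\Q^{q-1}$, and Lemma \ref{lem:hyp} produces a non-zero vector $r = (r_1, \ldots, r_{q-1}) \in \N^{q-1}$ with $r \notin \mathcal{H}$.

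For such an $r$, Proposition \ref{prop:Cr}(1) and (4) immediately give that $\C(r)$ is an $[Rn, k]_q$ MWS code. Avoiding the first union, together with part 5 of Proposition \ref{prop:Cr}, yields $a^r[\alpha^i] \neq b^r[\alpha^i]$ for all distinct codewords $a^r, b^r \in \C(r)$, which is exactly property (\ref{propertyA}) with $\beta = \alpha^i$. Avoiding the second union gives $c^r[\alpha^{i'}] \neq c^r[\alpha^\ell]$ for $1 \leq i' < \ell \leq q-1$ and $c^r[\alpha^\ell] \neq c^r[0]$ for all $1 \leq \ell \leq q-1$, showing that every non-zero $c^r \in \C(r)$ has pairwise distinct entries in $V(c^r)$, i.e., property (\ref{propertyB}).

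The only potential subtlety is that we must handle (\ref{propertyA}) and (\ref{propertyB}) simultaneously rather than one at a time. Since each condition has already been reduced separately to missing a finite union of proper hyperplanes, taking the union of both families is still a finite union of proper hyperplanes, and a single invocation of Lemma \ref{lem:hyp} completes the argument. No additional calculations beyond those already established are needed.
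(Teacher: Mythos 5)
Your proof is correct and follows essentially the same route as the paper: both arguments take the union of the two families of hyperplanes coming from conditions $(\star)$ and $(\star\star)$, verify via Corollary \ref{cor:exist1} and Lemma \ref{lem:exist2} that every member is a proper hyperplane, and apply Lemma \ref{lem:hyp} once to the combined union, concluding with Proposition \ref{prop:Cr} and Remark \ref{rem:all}. No substantive difference.
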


\begin{proof}
Consider $r=(r_1,\ldots,r_{q-1})\in \N^{q-1}$ not identically zero  such that both (\ref{eq:inters}) and (\ref{eq:2}) are satisfied. Such a vector $r$ exists, since we can always find, by Lemmas \ref{lem:hyp} and \ref{lem:exist2} and Corollary \ref{cor:exist1}, a non-zero vector in $\N^{q-1}$ that is not in
$$X_1\cup X_2,$$
where 
$$X_1= \bigcup_{\substack{a,b \in \C \\ a\neq b}} H_i^{a,b} \qquad \mbox{ and } \qquad X_2=\bigcup_{c\in \C\setminus\{0\}}\bigcup_{i=0}^{q-1}\bigcup_{\ell=i+1}^{q-1} H_{i,\ell}^c.$$

With this choice of $r$ the code $\C(r)$  satisfies (\ref{propertyA}) and (\ref{propertyB}) by construction (also see Remark \ref{rem:all}). Moreover, by part 4 of Proposition \ref{prop:Cr}, it is an $[Rn,k]_q$ MWS code.
\end{proof}

Now we are ready to give a second proof of the conjecture.

\begin{theorem}\label{thm:algexist} Let $q\geq 3$ be a prime power and $k$ be a positive integer. Then an $[n,k]_q$ MWS code  exists.  
%$$L(k,q) = \frac{q^k-1}{q-1}. $$
\end{theorem}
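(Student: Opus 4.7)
The plan is to prove, by induction on $k$, the slightly stronger statement that for every $k\geq 1$ and every prime power $q\geq 3$ there exists an $[n,k]_q$ MWS code satisfying property (\ref{propertyB}). For the base case $k=1$, I would construct an explicit code as follows: set $n=\binom{q}{2}$ and let $v\in \F_q^n$ be the vector containing exactly $i$ coordinates equal to $\alpha^i$ for each $i\in\{1,\ldots,q-1\}$. Any one-dimensional code is automatically MWS since $\theta_q(0)=1$, and by part 1 of Proposition \ref{prop:Vc}, for $\lambda=\alpha^j$ the first $q-1$ entries of $V(\lambda v)$ are the $j$-th cyclic shift of the first $q-1$ entries of $V(v)=(1,2,\ldots,q-1,0)$. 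So the entries of $V(\lambda v)$ form a permutation of $\{0,1,\ldots,q-1\}$ and are therefore pairwise distinct, which gives (\ref{propertyB}) for $\langle v\rangle$.

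For the inductive step, assume $\C$ is an $[n,k]_q$ MWS code satisfying (\ref{propertyB}). I first apply Theorem \ref{thm:cr} to produce a vector $r\in\N^{q-1}$ such that the generalized repetition code $\C(r)$ is an $[Rn,k]_q$ MWS code satisfying both (\ref{propertyA}) and (\ref{propertyB}). I then apply part 2 of Proposition \ref{prop:recursive} to $\C(r)$, which (since $q\geq 3$ and both hypotheses are met) yields an $[N,k+1]_q$ MWS code $\tilde\C$ satisfying (\ref{propertyB}). Iterating this two-step procedure from the base case covers all $k\geq 1$ and closes the induction.

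The main structural obstacle, and the reason the proof is more than a trivial concatenation of earlier results, is the choice of induction hypothesis. A plain MWS assumption is too weak, because the dimension-lifting statement of Proposition \ref{prop:recursive} part 2 demands both (\ref{propertyA}) and (\ref{propertyB}) as input, whereas the code it outputs only comes with (\ref{propertyB}) for free. Theorem \ref{thm:cr} is exactly the device that closes this loop by restoring (\ref{propertyA}) at the cost of passing to a suitable generalized repetition. Once the cycle (\ref{propertyB}) $\to$ (\ref{propertyA})$+$(\ref{propertyB}) $\to$ (\ref{propertyB}) at dimension $k+1$ is recognized, the only genuinely new work is exhibiting a seed of the induction, which the explicit vector $v$ above supplies.
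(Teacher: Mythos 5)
Your proposal is correct and follows essentially the same route as the paper: the same strengthened induction hypothesis (MWS together with property (\ref{propertyB})), the same seed code of length $\binom{q}{2}$ with entry distribution $(1,2,\ldots,q-1,0)$ up to relabeling, and the same inductive step alternating Theorem \ref{thm:cr} with part 2 of Proposition \ref{prop:recursive}. Your explicit verification of (\ref{propertyB}) for all scalar multiples of the seed vector via part 1 of Proposition \ref{prop:Vc} is a welcome detail the paper leaves to the reader.
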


\begin{proof}
We prove by induction that there exists an $[n^{(k)},k]_q$ MWS code $\C_k$  that satisfies (\ref{propertyB}).

For the case $k=1$ we take the code $\C_1 \subseteq \F_q^{n}$ generated by the codeword
$$c=(1,\alpha, \alpha, \alpha^2, \alpha^2, \alpha^2, \ldots, \underbrace{\alpha^{q-2},\ldots,\alpha^{q-2}}_{q-1 \text{ times }}), $$
where $n^{(1)}=\frac{q(q-1)}{2}$. It is easy to see that $\C_1$ is MWS and satisfies (\ref{propertyB}).

Suppose now that we have an $[n^{(k)},k]_q$ MWS code $\C_k$ satisfying (\ref{propertyB}). We want to show that we can build an $[n^{(k+1)},k+1]_q$ code $\C_{k+1}$ that satisfies the same property. By Theorem \ref{thm:cr}, we can find $r \in \N^{q-1}$ such that  $\C_k(r)$ is an $[Rn,k]_q$ MWS code satisfying (\ref{propertyA}) and  (\ref{propertyB}). Therefore, by part 2 of Proposition \ref{prop:recursive} we get an $[n^{(k+1)},k+1]_q$ MWS code $\C_{k+1}$ with property  (\ref{propertyB}).
\end{proof}

\begin{example}\label{exa1}
Here we see what happens in the easiest case that was not covered in \cite{shi18}, i.e. when $q=k=3$. We consider the finite field $\F_3=\{0,1,2\}$. The code $\C_1 \subseteq \F_3^3$ is $\{(0,0,0), (1,2,2)\}$ and the inductive construction gives a $[7,2]_3$ code $\C_2$ whose generator matrix is
$$G_2=\begin{pmatrix}
 1 & 2 & 2 & 0 & 0 & 0 & 0  \\
 1 & 1 & 1 & 2 & 2 & 2 & 2 
\end{pmatrix}.$$
Here, we can first compute the vectors $V(c)$ that are
$$\{(1,2,4), (2,1,4), (3,4,0), (4,3,0), (0,5,2), (5,0,2), (6,0,1), (0,6,1) \},$$ and we compute the hyperplanes $H_i^{a,b}$ that are the orthogonal complements of the vectors
\begin{align*}
\{&(1,-1), (1,1), (1,3), (3,1), (1,-3), (1,-2), (2,-1), (3,-1), (5,-2), (4,-1), (3,-4), \\
  & (2,-3), (6,-5), (1,0), (1,-4), (2,-5), (3,-2), (4,-3), (0,1), (5-6) \}.
  \end{align*}
Furthermore, the hyperplanes $H_{i,\ell}^c$ are the orthogonal complements of 
$$\{(3,2), (2,3), (3,4), (4,3), (2,-3), (3,-2), (5,-1), (1,-5)\}.$$
At this point we can choose $r=(1,6)$ that is not contained in any of those hyperplanes, and consider the code $\C_2(r)$. Such a code is a $[49,2]_3$ code over $\F_3$ whose generator matrix is given by 
%$$\left(\begin{array}{ccccccccccccccccccccccccccc}
% 2 & 1 & 1 & 0 & 0 & 0 & 0 & 0 & 0 & 1 & 2 & 2 & 0 & 0 & 0 & 0 & 0 & 0 & 1 & 2 & 2 & 0 & 0 & 0 & 0 & 0 & 0  \\
% 2 & 2 & 2 & 1 & 1 & 1 & 1 & 1 & 1 & 1 & 1 & 1 & 2 & 2 & 2 & 2 & 2 & 2 & 1 & 1 & 1 & 2 & 2 & 2 & 2 & 2 & 2 
%\end{array}\right).$$
$$\left(2G_2 \mid G_2 \mid G_2 \mid G_2 \mid G_2 \mid G_2 \mid G_2 \right)$$

Here the  vectors $V(c)$, for $0\neq c\in \C_2(r)$, are given by
$$ \{(8,13,28), (13,8,28), (22,27,0), (27,22,0), (5,30,14), (30,5,14), (36,6,7), (6,36,7) \},$$ 
The code $\C_2(r)$ satisfies (\ref{propertyB}). Moreover, $a[1]\neq b[1]$ for all $a,b \in \C_2(r)$ with $a\neq b$. Hence it satisfies also (\ref{propertyA}) and we can use part 1 of Proposition \ref{prop:recursive} and then construct the $[99,3]_q$ code as described in the proof of that result. We first embed the code $\C_2(r)$ into $\F_3^{99}$, by concatenating every codeword to the $0$ vector of length $50$. Finally we add to the resulting code the codeword whose entries are all equal to $1$, and we consider the subspace generated. The obtained code $\C_3$ will have $13$ distinct non-zero weights, given by $$w(\C_3)=\{21,49,35,42,99,91,86,77,72,94,69,63,93\},$$
i.e. it is a $[99,3]_q$ MWS code.
%The vectors $V(c)$ are given by
%\begin{align*}
%\{ &(5,4,72), (4,5,72), (15,12,54), (12,15,54), (14,7,60), (7,14,70). (8,16,67), (16,8,67) \\
%   &(27.54.0).  (18,59,4), (18,58,5), (0,69,12), (0,66,15), (6,68,7), (6,61,14), (3,62,16), (3,70,8), \\
%   &(54,27,0), (59,18,4), (58,18,5), (69,0,12), (66,0,15), (68,6,7), (61,6,14), (62,3,16), (70,3,8) \} 
%\end{align*}
%and each of them has pairwise distinct entries.
\end{example}

\section{Length of  MWS codes}
In this section we investigate lower bounds on the lengths of  $[n,k]_q$ MWS codes. A trivial lower bound is of course given by
$$n\geq \theta_q(k-1).$$

%\begin{remark}
%Note that the condition in part \ref{prop:recursive:pt1} of Proposition \ref{prop:recursive} (and part \ref{prop:recursive:pt2} (a) implies that $ n\ge q^k $
%\end{remark}

In the case $q=2$ this bound is sharp, in the sense that it is possible to construct a $[\theta_2(k-1), k]_2$ MWS code for each $ k\ge 1 $ \cite[Theorem 1]{shi18}.  This lower bound is not optimal  when $q \geq 3$. Indeed, we have the following result.

\begin{lemma}\label{lem:geomLBlength}
Let $ \C $ be an $[n,k]_q $ MWS code with $ k\ge 2 $. Then 
\[ n\ge  \left\lceil\frac{q\cdot\theta_q(k-1)}{2}\right\rceil= \left\lceil\frac{q^{k+1}-q}{2(q-1)}\right\rceil= \left\lceil\frac{1}{2}\left[q^k+q^{k-1}+\cdots+q  \right]\right\rceil.\]
\end{lemma}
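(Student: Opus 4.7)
My plan is to exploit the geometric viewpoint established in Section~2.2, specifically the correspondence between an $[n,k]_q$ MWS code $\C$ and a projective system $\M$ in $\Pi = PG(k-1,q)$ whose character function $\ch_\M$ is injective on the set of hyperplanes (Lemma~\ref{lem:GeometricConditions}). The idea is to compare, on one hand, the sum $\sum_H \ch_\M(H)$ computed via a double-counting of point-hyperplane incidences, and on the other hand, the smallest possible value of that sum given that the summands are forced to be distinct non-negative integers.

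First, I count incidences in two ways: since every point of $\Pi$ lies in exactly $\theta_q(k-2)$ hyperplanes,
\[
\sum_{H} \ch_\M(H) \;=\; \sum_{P\in \Pi} m(P)\cdot |\{H : P\in H\}| \;=\; \theta_q(k-2)\cdot n,
\]
where the outer sum on the left runs over all $\theta_q(k-1)$ hyperplanes of $\Pi$. Next, the values $\ch_\M(H)$ are, by the MWS hypothesis, $\theta_q(k-1)$ pairwise distinct non-negative integers; moreover each satisfies $\ch_\M(H)\le n-1$, since $\ch_\M(H)=n$ would force every point of $\M$ to lie in the single hyperplane $H$, contradicting non-degeneracy. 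Consequently the minimum value of $\sum_H \ch_\M(H)$ is attained when the characters are exactly $0,1,\dots,\theta_q(k-1)-1$, yielding
\[
\theta_q(k-2)\cdot n \;\ge\; \binom{\theta_q(k-1)}{2} \;=\; \frac{\theta_q(k-1)\bigl(\theta_q(k-1)-1\bigr)}{2}.
\]

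Finally, the identity $\theta_q(k-1)-1 = q\,\theta_q(k-2)$ (a direct consequence of $\frac{q^k-1}{q-1} = q\cdot\frac{q^{k-1}-1}{q-1}+1$) lets me cancel $\theta_q(k-2)$ from both sides, leaving $n\ge \frac{q\,\theta_q(k-1)}{2}$. Since $n$ is an integer, we may take the ceiling, which gives the stated inequality. I do not foresee a serious obstacle here: the argument is essentially a two-line double count plus the trivial observation about sums of distinct integers; the only small subtlety is verifying that no hyperplane can carry all the multiplicity (so that each character is at most $n-1$), which is exactly the non-degeneracy built into the notion of a projective $[n,k,d]_q$-system.
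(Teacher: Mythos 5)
Your proposal is correct and is essentially identical to the paper's proof: the same double count of point--hyperplane incidences, the same lower bound on a sum of $\theta_q(k-1)$ pairwise distinct non-negative characters, and the same cancellation via $\theta_q(k-1)-1=q\,\theta_q(k-2)$. The extra observation that each character is at most $n-1$ is harmless but not needed.
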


\begin{proof}
Let $ \C $ be an $ [n,k]_q $ MWS code with $ k\ge 2 $. Let the columns of a generator matrix correspond to the  $ [n,k,d]_q $ projective system $ \M $  in $ \Pi=PG(k-1,q)$.  Consider the set $ S $ of incident point-hyperplane pairs $ (P,\Lambda) $, where $ P\in \M $. Summing over all members of $ \M $ we obtain
\begin{equation}\label{eqn:lb-1a}
|S| = \sum_{P\in \M} \theta_q(k-2) = n \cdot \theta_q(k-2).
\end{equation}
On the other hand, summing over all hyperplanes of $ \Pi $ we obtain

\begin{equation}\label{eqn:lb-1b}
|S| =\sum_{H\in \Pi} \ch_\M(H) \ge \sum_{i=0}^{\theta_q(k-1)-1}i = \frac{\theta_q(k-1)(\theta_q(k-1)-1)}{2}=\frac{\theta_q(k-1)\cdot q\cdot\theta_q(k-2)}{2},
\end{equation}
where the inequality is a consequence of Lemma \ref{lem:GeometricConditions}.
The result follows from  (\ref{eqn:lb-1a}) and (\ref{eqn:lb-1b}).
\end{proof}

As we have seen in Theorem \ref{thm:LongGeomExamples}, there exist $[n,k]_q$ MWS codes of length $n=\bo(2^{q^{k-1}})$. We are therefore motivated to determine values of $ n $ for which $[n,k]_q$ MWS codes exist with 
$$ \left\lceil\frac{q\cdot\theta_q(k-1)}{2}\right\rceil \le n < 2^{\theta_q(k-1)}-1.$$

\begin{corollary}
If $ \C $ is an  $ [n,k]_q $ MWS code satisfying property (\ref{propertyA}), then 
\begin{equation}\label{eqn:MWSpropertyA}
n\ge \left\lceil\frac{q^{k+2}-3q+2}{4(q-1)}\right\rceil. 
\end{equation}

\end{corollary}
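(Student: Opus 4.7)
The plan is to use property (\ref{propertyA}) to derive, via a short double-counting argument, the cleaner lower bound
$$n \;\geq\; \frac{q(q^k-1)}{2},$$
and then verify algebraically that this already implies the stated inequality (\ref{eqn:MWSpropertyA}). The key observation is that property (\ref{propertyA}) supplies a $\beta \in \F_q^*$ for which the map $c \mapsto c[\beta]$ is injective on $\C$. Since $0 \in \C$ satisfies $0[\beta] = 0$, the remaining $q^k - 1$ nonzero codewords must yield $q^k - 1$ pairwise distinct positive integers in $\{1, 2, \ldots, n\}$, whence
$$\sum_{c \in \C \setminus \{0\}} c[\beta] \;\geq\; 1 + 2 + \cdots + (q^k-1) \;=\; \frac{q^k(q^k-1)}{2}.$$

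Next I would compute the same sum by swapping summation order. Let $G$ be a generator matrix of $\C$ with columns $g_1, \ldots, g_n$, and write $c = vG$ for $v \in \F_q^k$. Non-degeneracy forces $g_i \neq 0$, so the linear form $v \mapsto v \cdot g_i$ is surjective onto $\F_q$ and $|\{v \in \F_q^k : v \cdot g_i = \beta\}| = q^{k-1}$ for every $i$. Therefore
$$\sum_{c \in \C} c[\beta] \;=\; \sum_{i=1}^{n} \bigl|\{v \in \F_q^k : v \cdot g_i = \beta\}\bigr| \;=\; n\,q^{k-1}.$$
Combining this equality (noting that the zero codeword contributes $0$ to the sum) with the previous lower bound yields $n q^{k-1} \geq q^k(q^k-1)/2$, equivalently $n \geq q(q^k-1)/2$.

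Finally I would close the gap between this bound and the corollary by verifying the algebraic inequality
$$\frac{q(q^k-1)}{2} \;\geq\; \frac{q^{k+2} - 3q + 2}{4(q-1)}.$$
After clearing denominators this reduces to $(q-2)(q^{k+1} - 2q + 1) \geq 0$, manifestly non-negative for all $q \geq 2$ and $k \geq 1$; taking ceilings then gives (\ref{eqn:MWSpropertyA}). I do not expect any serious obstacle. The only subtlety is conceptual: recognizing that property (\ref{propertyA}) is engineered precisely to produce $q^k$ distinct values of the statistic $c \mapsto c[\beta]$, after which the bound follows from elementary counting combined with a one-line polynomial manipulation.
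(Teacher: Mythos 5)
Your proof is correct, but it takes a genuinely different route from the paper's, and it actually proves something stronger. The paper's proof is a two-liner: it invokes part 1 of Proposition \ref{prop:recursive} to lift $\C$ to a $[2n+1,k+1]_q$ MWS code $\bar\C$, then applies the point--hyperplane double count of Lemma \ref{lem:geomLBlength} to $\bar\C$ and solves $2n+1\ge \lceil q\,\theta_q(k)/2\rceil$ for $n$. You instead stay in dimension $k$ and double-count the incidences between codewords and coordinates equal to $\beta$: injectivity of $c\mapsto c[\beta]$ forces $\sum_{c\ne 0}c[\beta]\ge 1+2+\cdots+(q^k-1)$, while non-degeneracy gives the exact column count $\sum_{c}c[\beta]=nq^{k-1}$, yielding $n\ge q(q^k-1)/2$. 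Your algebra is right: the comparison with the stated bound clears to $(q-2)(q^{k+1}-2q+1)\ge 0$, so \eqref{eqn:MWSpropertyA} follows, with equality of the two bounds at $q=2$. What each approach buys: the paper's argument reuses machinery already built (the recursive construction and the geometric lower bound) and so costs nothing locally; yours is self-contained, more elementary, and for $q\ge 3$ gives a strictly better bound --- asymptotically $q^{k+1}/2$ versus the stated $\approx q^{k+1}/4$ --- which is worth recording as an improvement rather than merely a proof of the corollary as stated.
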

\begin{proof}
By Proposition \ref{prop:recursive}, $\C$ gives rise to an $ [2n+1,k+1]_q $ MWS code $ \bar{\C} $. The result follows by applying the bound in Lemma \ref{lem:geomLBlength} to $ \bar{\C} $
\end{proof}

\begin{remark}
 Observe that in the case $q=2$, the bound (\ref{eqn:MWSpropertyA}) becomes 
$$n\ge \left\lceil\frac{2^{k+2}-6+2}{4}\right\rceil=2^{k}-1=\theta_2(k-1),$$
which coincides with the bound in Lemma \ref{lem:geomLBlength}. For $ q\ge 3 $ the two bounds diverge, so the question remains as to whether the bound in Lemma \ref{lem:geomLBlength} may be sharp for some $q\ge 3$.  For $k=2$ we have an answer.

%For the case $q\geq 3$ we have 
%$$n\ge \left\lceil\frac{q^{k+2}-3q+2}{4(q-1)}\right\rceil =\left\lceil\frac{q^{k+2}-2q^{k+1}-q+2}{4(q-1)}+\frac{2q^{k+1}-2q}{4(q-1)}\right\rceil. $$
%Since for $k\geq 1, q \geq 3$, we have that $\frac{q^{k+2}-2q^{k+1}-q+2}{4(q-1)}\geq 1$, we obtain that the two bounds are different. This means that for $q\geq 3$, an $[n,k]_q$ MWS code that satisfies property (\ref{propertyA}) can not have length equal to $\left\lceil\frac{q^{k+1}-q}{2(q-1)}\right\rceil. $ Therefore, at least one of the following is true: the bound in Lemma \ref{lem:geomLBlength} is not sharp for $q\ge 3$, or; the algebraic construction never gives  an $[n,k]_q$ MWS code with minimum length possible when $q\geq 3$. However, for $k=2$ we have an answer.
\end{remark}

\begin{proposition}\label{Prop: sharp for k=2}
For $ k=2 $, the bound in Lemma \ref{lem:geomLBlength} is sharp for all prime powers $ q $.
\end{proposition}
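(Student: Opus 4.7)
\medskip

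\noindent\textbf{Proof plan for Proposition \ref{Prop: sharp for k=2}.} The plan is to exhibit, for each prime power $q$, a projective $[n,2,d]_q$ system $\M$ in $PG(1,q)$ satisfying the hypothesis of Lemma \ref{lem:GeometricConditions} (injective character function) and having $n=\lceil q(q+1)/2\rceil = q(q+1)/2$ points counted with multiplicity. Because the ceiling is vacuous (for every $q$, the product $q(q+1)/2$ is already an integer), meeting the bound is equivalent to constructing a multiset of exactly $q(q+1)/2$ points of the projective line with pairwise distinct characters.

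The key observation driving the construction is that in $PG(1,q)$, hyperplanes are single points: there are $q+1$ of them, namely $P_0,P_1,\ldots,P_q$, and for any multiset $\M$ one has $\ch_\M(P_i)=m(P_i)$. Thus asking that $\ch_\M$ be injective amounts to asking that the $q+1$ multiplicities $m(P_0),\ldots,m(P_q)$ be pairwise distinct non-negative integers. Minimising $|\M|=\sum_i m(P_i)$ subject to this constraint forces $\{m(P_0),\ldots,m(P_q)\}=\{0,1,\ldots,q\}$ up to relabelling, giving $|\M|=q(q+1)/2$, precisely the bound of Lemma \ref{lem:geomLBlength}.

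First, I would enumerate the points of $PG(1,q)$ as $P_0,\ldots,P_q$ and assign $m(P_i):=i$. This defines $\M$. Second, I would check that $\M$ is a genuine projective $[n,2,d]_q$-system: non-degeneracy requires that $\M$ does not lie entirely in a single point (i.e.\ a hyperplane of $PG(1,q)$), which is clear since at least the $q\geq 2$ points $P_1,\ldots,P_q$ have positive multiplicity. Third, by the discussion following the definition of projective systems, $\M$ corresponds to a linear $[n,2]_q$ code $\C$ whose nonzero weights are exactly the values $n-\ch_\M(P_i)=n-i$ for $i=0,1,\ldots,q$. These are $q+1=\theta_q(1)$ pairwise distinct positive integers (positivity uses $n-q=q(q-1)/2\geq 1$ for $q\geq 2$), so $|w(\C)|=\theta_q(1)$ and $\C$ is MWS.

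There is essentially no obstacle; the only thing to be careful about is the degenerate-looking step of allowing a point of multiplicity $0$, which is legitimate because it simply means that point does not contribute a column to the generator matrix, while still forcing a codeword of weight $n$ to exist (via the hyperplane that is $P_0$ itself). With the three verifications above, $\C$ attains the bound of Lemma \ref{lem:geomLBlength}, proving sharpness for $k=2$ and every prime power $q$.
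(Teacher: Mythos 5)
Your construction is exactly the one in the paper: take the $q+1$ points of $PG(1,q)$ with multiplicities $0,1,\ldots,q$, note that hyperplanes of the line are points so the characters are automatically pairwise distinct, and conclude that the resulting code is MWS of length $q(q+1)/2$. The argument is correct, and your additional verifications (integrality of the bound, non-degeneracy, positivity of the weights) only make explicit what the paper leaves implicit.
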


\begin{proof}
Denote the points of $ \ell=PG(1,q)=\{P_0,P_1,\ldots,P_q\} $, and consider the projective system $ \M $ in which every point $P_i$ appears with multiplicity $i$. The corresponding linear code is MWS and of length $ n= \frac{q(q+1)}{2}$.
\end{proof}

%\begin{remark}
%Note that, for $q\geq 3$, the code in the proof of Proposition \ref{Prop: sharp for k=2} cannot have  property (\ref{propertyA}).
%\end{remark}

\begin{proposition}\label{prop:projSmallq}
 There exists an $[7,3]_2$ MWS code, and there exists an $ [32,3]_3 $-MWS code.
\end{proposition}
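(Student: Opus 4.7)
By Lemma \ref{lem:GeometricConditions}, constructing an $[n,3]_q$ MWS code is equivalent to exhibiting a non-degenerate multiset $\M$ in $PG(2,q)$ of size $n$ whose character function is injective on the $\theta_q(2)=q^2+q+1$ lines. So for each $q\in\{2,3\}$ the task reduces to producing such a multiset and verifying the line sums are pairwise distinct.

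For $q=2$: Fix a basis $\{A,B,C\}$ of the Fano plane $PG(2,2)$ and take $\M$ with multiplicities $(m_A,m_B,m_C)=(1,2,4)$, every other point having multiplicity $0$. Then $|\M|=7$ and the support spans $PG(2,2)$, so $\M$ is non-degenerate. Because $A,B,C$ are non-collinear, each of the $7$ lines of $PG(2,2)$ meets $\{A,B,C\}$ in precisely $0$, $1$, or $2$ points. Using uniqueness of binary representation, the character on a line is therefore a subset sum of $\{1,2,4\}$ of cardinality at most $2$; these subset sums range over $\{0,1,2,3,4,5,6\}$ with each value realised by exactly one line. Hence the character is injective and a $[7,3]_2$ MWS code exists.

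For $q=3$: The plan is to exhibit an explicit multiplicity vector $(m_{P_0},\ldots,m_{P_{12}})\in\N^{13}$ with $\sum_i m_i=32$ and spanning support, and then verify by direct computation that the $13$ line characters are pairwise distinct (a consistency check is that they must sum to $4\cdot 32=128$, since each of the $13$ points lies on exactly $4$ lines of $PG(2,3)$). The verification is line-by-line over $13$ lines of size $4$, so it is completely routine once the multiplicity vector is written down.

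The main obstacle is the $q=3$ case. The Fano-plane trick does not transfer, because each line of $PG(2,3)$ now contains $q+1=4$ points and there are $13>7$ lines to separate, so placing mass on a single basis is nowhere near enough. One must search among more spread-out configurations --- for instance, concentrating mass on a frame, on a line plus several off-line points, or on a conic together with a few auxiliary points --- and tune the multiplicities so that, for every line, the $4$-point sum differs from the $12$ others. Locating such a compact multiset of total weight exactly $32$ (much shorter than the $[99,3]_3$ code of Section~\ref{sec:alg} and the geometric $[2^{13}-1,3]_3$ code of Theorem~\ref{thm:LongGeomExamples}) is where the content of the proposition lies; the remaining verification is mechanical.
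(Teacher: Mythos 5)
Your $q=2$ argument is correct and is exactly the paper's: a triangle in the Fano plane with multiplicities $1,2,4$, so that the seven line characters are the seven subset sums of $\{1,2,4\}$ over subsets of size at most $2$, namely $0,1,\ldots,6$.

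The $q=3$ case, however, is not proved. You describe \emph{what} a proof would consist of --- a multiplicity vector in $\N^{13}$ of total weight $32$ with spanning support and pairwise distinct line sums --- and you correctly observe the consistency check $\sum_{\ell}\ch_\M(\ell)=4\cdot 32=128$, but you never exhibit the multiset. You say yourself that ``locating such a compact multiset \ldots is where the content of the proposition lies,'' which is an admission that the essential step is missing: an existence claim is not established by a search plan. The paper closes this gap with an explicit configuration in $PG(2,3)$ (given in Figure~\ref{fig:PG(2,3)}): eight points carry multiplicities $1,1,1,2,3,4,8,12$ (total $32$) and the remaining five points have multiplicity $0$; one then checks directly that the thirteen line characters are $1,2,4,5,6,8,10,11,12,13,16,18,22$, which are pairwise distinct (and indeed sum to $128$). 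To complete your proof you would need to supply such a vector and carry out the thirteen line-sum verifications; without it the second half of the proposition remains unestablished.
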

\begin{proof}
For the first part, pick a triangle of points $ P_0,P_1,P_2$ in the (Fano) plane $ \Pi= PG(2,2) $. Construct a projective system $ \M $, whereby $ m(P_i)=2^i $.  One easily verifies that the characters of the seven lines in $ \Pi $ are $ 0,1,2,\ldots,6 $ respectively. Hence, the corresponding code is MWS.\\ 
For the second part, let $ \Pi=PG(2,3) $, and define the projective system $ \M $ with point multiplicities as indicated in the diagram.

\begin{figure}[h]
\centering
\begin{tikzpicture}
\node[circle, draw,inner sep=0,minimum size=3cm,outer sep=0, thick] (c) at(0,0){};
\node[regular polygon, regular polygon sides=3, draw,
inner sep=0,minimum size=6cm, thick] (t) at (0.0,0) {};

\foreach \nn in {c.270,t.90,t.210}{
\node[fill=black,circle] at(\nn){};
}

\node[fill=black,circle] at (0,0){};
\draw[thick] (c.30) -- (t.210) (c.150) -- (t.330)  (c.270) -- (t.90);

\node[fill=blue,circle, text=white, inner sep=0, minimum size=14] at (c.30) {2};
\node[fill=blue,circle, text=white, inner sep=0, minimum size=14] at (c.150) {1};
\node[fill=blue,circle, text=white, inner sep=0, minimum size=14] at (t.330) {4};
%\node[fill=blue,circle, text=white, inner sep=0, minimum size=14] at (t.210) {0};
\end{tikzpicture}
\begin{tikzpicture}[
every node/.style={fill=blue,circle, text=white, inner sep=0, minimum size=14},
every path/.style={draw=white,double=black, very thick}
]

\path (-1,-1)  node (7) {8} 
  (-1,0)  node (4) {1}
  (-1,1)  node (1) {1}
  (0,-1)  node (8) {4}
  (0,0)  node[fill=black,circle, text=white, inner sep=0, minimum size=14] (5) { }
 (0,1)  node (2) {2}
 (1,-1)  node (9) {1}
 (1,0)  node[fill=black,circle, text=white, inner sep=0, minimum size=14] (6) { }
 (1,1)  node (3) {3};

\path (45:3) node[fill=black,circle, text=white, inner sep=0, minimum size=14] (10) { }  (0:3) node[fill=black,circle, text=white, inner sep=0, minimum size=14] (11) { }  (-45:3) node[fill=black,circle, text=white, inner sep=0, minimum size=14] (12) { }  (-90:3) node (13) {12};

\draw (1) to (2) to (3) to[out=0,in=135] (11);
\draw (4) to (5) to (6) to (11);
\draw (7) to (8) to (9) to[out=0,in=-135] (11);

\draw (1) to (4) to (7) to[out=-90,in=135] (13);
\draw (2) to (5) to (8) to (13);
\draw (3) to (6) to (9) to[out=-90,in=45] (13);

\draw (7) to[out=135,in=135, distance=100] (2) (2) to (6) (6) to[out=-45,in=90] (12);
\draw (1) to (5) to (9) to (12);
\draw (3) to[out=135,in=135, distance=100] (4) (4) to (8) (8) to[out=-45,in=180] (12);

\draw (1) to[out=-135,in=-135, distance=100] (8) (8) to (6) (6) to[out=45,in=-90] (10);
\draw (7) to (5) to (3) to (10);
\draw (9) to[out=-135,in=-135, distance=100] (4) (4) to (2) (2) to[out=45,in=180] (10);

\draw[bend left=17] (10) to (11) (11) to (12) (12) to (13);

\end{tikzpicture}

\caption{ } \label{fig:PG(2,3)}
\end{figure}

The characters of the lines of $ \Pi $ are $ 1,2,4,5,6,8,10,11, 12, 13, 16, 18, $ and $ 22 $ respectively.
\end{proof}
Note that the existence of a $ [7,3]_2 $-MWS code is also shown in \cite{shi18}.

\newpage

\begin{proposition}\label{prop:proj3D}
 For each $q$, there exists an $[n,3]_q$ MWS code with $n\le \frac{q-1}{2}(q^3+q^2+q)$.
\end{proposition}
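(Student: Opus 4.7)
By Lemma \ref{lem:GeometricConditions}, producing an $[n,3]_q$ MWS code with $n\le \frac{q-1}{2}(q^3+q^2+q)=\frac{(q-1)q(q^2+q+1)}{2}$ amounts to exhibiting a projective system $\M$ in $\Pi:=PG(2,q)$ whose character function $\ch_{\M}$ is injective on the $q^2+q+1$ lines, with total multiplicity bounded by the same quantity. I would construct such an $\M$ directly, generalizing the $[7,3]_2$ Fano construction and the $[32,3]_3$ construction of Proposition \ref{prop:projSmallq}.

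The proposed construction uses two layers. First, fix a line $\ell\subset \Pi$ and weight its $q+1$ points $R_0,R_1,\ldots,R_q$ with the multiplicities $0,1,2,\ldots,q$ of the $PG(1,q)$-MWS pattern of Proposition \ref{Prop: sharp for k=2}, scaled by a coarse factor $\kappa$ to be chosen of order $q^2$. For every line $L\ne \ell$, the contribution $\kappa\cdot i$, where $L\cap\ell=R_i$, then deposits $\ch_{\M}(L)$ into one of $q+1$ widely separated bands; in effect, the pencils of non-$\ell$ lines through the points of $\ell$ become ``color-coded'' by $\kappa i$. Second, on the $q^2$ points of $\Pi\setminus\ell$ (viewed as an affine plane $\F_q^2$), add a fine layer of small multiplicities designed so that, within each pencil, the $q$ non-$\ell$ lines pick up $q$ distinct residual corrections. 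A natural candidate is a grid labeling of the form $m(x,y)=y\cdot c^x$ for a judiciously chosen integer base $c$, possibly together with a small number of single-point perturbations to resolve the residual coincidence between $\ch_{\M}(\ell)$ and the ``zero-sum'' affine line (as arises, for instance, when $m\equiv 0$ along $\{y=0\}$).

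The main obstacle is ensuring the simultaneous within-pencil injectivity across all $q+1$ pencils using a single fine weighting whose total cost is dominated by $\kappa$; this is the combinatorial core of the argument. I expect the existence of a suitable weighting to follow from a hyperplane-avoidance argument in the spirit of Lemma \ref{lem:hyp} applied to the $\binom{q+1}{1}\binom{q}{2}$ within-pencil distinctness constraints, possibly combined with an explicit algebraic labeling exploiting the multiplicative structure of $\F_q$. Once the construction is in place, the total multiplicity is $\kappa\cdot q(q+1)/2$ from the coarse layer plus a $O(q^3)$ fine contribution, and choosing $\kappa=\Theta(q^2)$ yields $|\M|=O(q^4)$, which fits within $\frac{(q-1)q(q^2+q+1)}{2}$ for all sufficiently large $q$; the small cases $q=2,3$ are already handled explicitly by the Fano and $[32,3]_3$ constructions of Proposition \ref{prop:projSmallq}.
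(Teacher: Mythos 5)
Your proposal is a plan rather than a proof: the step you yourself identify as ``the combinatorial core'' --- finding a single fine weighting of the $q^2$ affine points that simultaneously separates the $q$ non-$\ell$ lines in every one of the $q+1$ pencils, while keeping each line's fine contribution below the band spacing $\kappa$ --- is never carried out. The appeal to a hyperplane-avoidance argument in the spirit of Lemma \ref{lem:hyp} does not close this gap: that lemma produces an integer point of the form $(1,t,\ldots,t^{m-1})$, and the Schwartz--Zippel refinement used later in the paper only bounds each coordinate by the number of constraints; with $q^2$ unknowns and on the order of $q^3$ within-pencil distinctness constraints, either route yields fine multiplicities, and hence a total length, well above the target. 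The explicit candidate $m(x,y)=y\cdot c^x$ is also problematic, since the exponentially growing weights $c^x$ would push a single line's fine sum past any $\kappa=\Theta(q^2)$ and destroy the banding. Finally, even granting the construction, ``$|\M|=O(q^4)$ for all sufficiently large $q$'' does not establish $n\le \frac{q-1}{2}(q^3+q^2+q)$ \emph{for each} $q$: the implicit constant matters, and the intermediate values of $q$ covered neither by your asymptotics nor by the explicit $q=2,3$ examples of Proposition \ref{prop:projSmallq} are left open.

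For comparison, the paper's proof avoids the pencil decomposition entirely. It supports $\M$ on the three sides of a triangle $\{P,Q,R\}$, assigning the $q-1$ non-vertex points of the sides $\ell_0,\ell_1,\ell_2$ the multiplicities $i$, $iq$, and $iq^2$ respectively for $1\le i\le q-1$ (with small adjustments when $q$ is odd). Every line then has character $a+bq+cq^2+dq^3$ with base-$q$ digits $0\le a,b,c,d\le q-1$ determined by which side it meets and where, and injectivity reduces to the fact that two points determine a unique line, so no two lines can agree in two digits. This is fully explicit, works uniformly for every $q>3$, and gives length exactly $\sum_{j=0}^{2}\sum_{i=1}^{q-1}iq^j=\frac{q-1}{2}(q^3+q^2+q)$. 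If you want to salvage your two-layer idea you would need an explicit fine weighting with provably small line sums, at which point you are essentially driven back to a digit-separation trick of this kind.
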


\begin{proof}
For $ q=2,3 $, the result follows from Proposition \ref{prop:projSmallq}, so assume $ q>3 $.  Let $ \Pi=PG(2,q) $  and consider a set $ T=\{P,Q,R\} $ of three non-collinear points, and the three lines that are their joins, $ \ell_0=\langle P,R \rangle $, $ \ell_1=\langle P,Q\rangle  $, and $ \ell_2=\langle Q,R\rangle. $
Let $ \ell_0\setminus T=\{P_1,P_2,\ldots,P_{q-1}\} $, $ \ell_1\setminus T=\{Q_1,Q_2,\ldots,Q_{q-1}\} $, and $ \ell_2\setminus T=\{R_1,R_2,\ldots,R_{q-1}\}$. 

\begin{figure}[h]
\centering
 \begin{tikzpicture}[dot/.style={circle,inner sep=2pt,fill,label={#1},name=#1}, 
 every path/.style={draw=black, thick},
   extended line/.style={shorten >=-#1,shorten <=-#1}, 
   extended line/.default=1cm] 
    
 \node [fill=blue,circle, text=white, inner sep=0, minimum size=8, label=below:{$ P $},inner sep=1pt] (P) at (0,0) {};
 \node [fill=blue,circle, text=white, inner sep=0, minimum size=8,label=above:{$ R $},inner sep=1pt] (R) at (2,2) {};
 \node [fill=blue,circle, text=white, inner sep=0, minimum size=8,label=left:{$ Q $},inner sep=1pt] (Q) at (1,2.5) {};
 \node [label=right:{$ \ell_0 $},inner sep=1pt] (B) at (3,3) {};
 \node [label=right:{$ \ell_1 $},inner sep=1pt] (C) at (1.5,3.8) {};
 \node [label=above:{$\ell_2 $},inner sep=1pt] (M) at (-1,3.5) {};

 %\draw [extended line=0.0cm] (P) -- (B);
 % \draw [extended line=1.0cm] (R) -- (P);  
 \draw [extended line=1.5cm] (P) -- (R);
  \draw [extended line=1.5cm] (Q) -- (P);  
  \draw [extended line= 2.5cm] (R) -- (Q); 
 \end{tikzpicture} 
\caption{ } \label{fig:triangle}
\end{figure}

Next, we assign multiplicities to points to construct the projective system $ \M $. \\
Consider first the case that $ q>2 $ is even. 
 Let $ m(P_i)=i $, $ m(Q_i)=iq $, and $ m(R_i)=iq^2 $, $ 1\le i \le q-1 $. We claim that no two lines of $ \Pi $  have the same character.  For each line, $ \ell $,   $ \ch_{\M}{(\ell)} $  may be written uniquely as $ a+bq+cq^2+dq^3 $,  where $ 0\le a,b,c,d\le q-1 $. Moreover, $ |\ell\cap T|=t $, where $ 0\le t\le 2 $. Consider the following cases.
\begin{description}
\item[t=0:] In this case, $ 1\le a,b,c\le q-1 $, and $ d=0 $. Moreover, as two points uniquely determine a line, no two such lines agree in as many as two of $ a,b,c $. 
\item[t=1:] In this case, exactly one of $ a,b,c $ are non-zero, and $ d=0 $. That no two such lines have the same character follows from the fact that two points determine an unique line, and that no two (distinct) points in $ \{\ell_0\cup\ell_1\cup\ell_2\}\setminus T $ have the same multiplicity.
\item[t=2:] For $ \ell_0 $ we have $ a=\frac{q}{2} $, $ b=\frac{q-2}{2} $, $ c=d=0 $; for $ \ell_1 $ we have $ b=\frac{q}{2} $, $ c=\frac{q-2}{2} $, $ a=d=0 $; and for $ \ell_2 $ we have $ c=\frac{q}{2} $, $ d=\frac{q-2}{2} $, and $ a=b=0 $.           
\end{description}    
Consider now the case that $ q>2 $ is odd. In this case, we assign all multiplicities as in the case $ q $ is even, with the following exceptions: $ m(R_{\frac{q-1}{2}})=0 $, and $ m(P)=\frac{q-1}{2}q^2 $. For each line, $ \ell $,   $ \ch_{\M}{(\ell)} $  may be written uniquely as $ a+bq+cq^2+dq^3 $,  where $ 0\le a,b,c,d\le q-1 $. Moreover, $ |\ell\cap T|=t $, where $ 0\le t\le 2 $. Consider the following cases.
\begin{description}
\item[t=0:] In this case, $ 1\le a,b\le q-1 $, $ 0\le c \le q-1 $, $ c\ne \frac{q-1}{2} $, and $ d=0 $. As two points uniquely determine a line, no two such lines agree in both $ a$ and $b$. 
\item[t=1:] If $ P\in \ell $ then $ a=b=0, $ $ 0\le c\le q-1 $, and $ 0\le d \le 1 $. If $ Q\in \ell $ then $ b=c=0 $, $ 1\le a \le q-1 $, and $ d=0 $. If $ R\in \ell $ then $ a=c=0 $, $ 1\le b\le q-1 $, and $ d=0 $. No two such lines have the same character follows from the fact that two points determine an unique line, and that no two (distinct) points in $ \{\ell_0\cup\ell_1\cup\ell_2\}\setminus T $ have the same multiplicity.
\item[t=2:] If $\ell= \ell_0 $ then $ b=c=\frac{q-1}{2} $, and $ a=d=0 $. If $\ell= \ell_1 $ then $ a=b=d=0 $, and $ c=q-1 $. If $\ell= \ell_2 $ then $ a=b=0 $, $ c=\frac{q+1}{2} $, and $ d=\frac{q-3}{2} $.
\end{description} 
Consequently, for $ q>3 $ we arrive at an $[n,3]_q $ MWS code $ \C$, where \[ n=\sum_{j=0}^{2}\sum_{i=1}^{q-1}iq^{j} = \frac{q-1}{2}(q^3+q^2+q).\]
%This proves the first part.  To prove the second part of the proposition, consider the same projective systems as in part (a), but supplemented by setting $ \ch_{\M}(P)=\ch_{\M}(Q)=\ch_{\M}(R)=q^3$. The result follows by the same arguments as in part (a), observing that now, in each of the cases considered, $ t=d $.  

\end{proof}

\begin{lemma} \label{lem:PSspans}
If $ C $ is an $ [n,k]_q $-MWS code with projective system $ \M $ in $ \Pi= $PG$ (k-1,q) $, then $ \M $ spans $ \Pi $.
 \end{lemma}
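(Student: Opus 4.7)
The statement is in fact independent of the MWS hypothesis: for \emph{any} $[n,k]_q$ code, the associated projective system must span $\Pi$. The plan is to argue by contrapositive using the correspondence between codewords and hyperplane characters.

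Suppose $\mathcal{M}$ does not span $\Pi$, so that every point of $\mathcal{M}$ lies in some hyperplane $H$ of $\Pi=PG(k-1,q)$. Represent $H$ by a nonzero linear form, i.e.\ fix $v=(v_1,\ldots,v_k)\in \F_q^k\setminus\{0\}$ such that $H=\{[x_1:\cdots:x_k]\in\Pi : v_1x_1+\cdots+v_kx_k=0\}$. Every column of a generator matrix $G$ of $\C$ represents a point of $\mathcal{M}$ and therefore lies in $H$, which means $v$ annihilates every column of $G$; equivalently $vG=0\in \F_q^n$. This exhibits a nontrivial linear dependence among the rows of $G$, contradicting the fact that $\dim \C=k$ forces $\rk(G)=k$.

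One can phrase the same argument via the preceding lemma on characters of hyperplanes. If $\mathcal{M}\subseteq H$ then $\ch_{\mathcal{M}}(H)=n$, so there would exist a codeword of weight $n-\ch_{\mathcal{M}}(H)=0$. Since the only weight-zero element of $\C$ is the zero codeword, the hyperplane $H$ must arise from the zero linear form, which is impossible; hence no such $H$ exists and $\mathcal{M}$ spans $\Pi$.

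There is no real obstacle here: the proof is essentially a one-line consequence of the equivalence between non-degenerate linear codes of dimension $k$ and projective systems in $PG(k-1,q)$. The lemma is recorded separately presumably because later arguments (bounding the length of MWS codes, or manipulating the associated $\mathcal{M}$) will tacitly use that $\mathcal{M}$ cannot degenerate into a lower-dimensional projective subspace.
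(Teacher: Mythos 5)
Your proof is correct, but it takes a genuinely different route from the paper's. You prove the stronger statement that the projective system of \emph{any} code of dimension $k$ spans $\Pi$: if $\M$ were contained in a hyperplane defined by $v\neq 0$, then $vG=0$, contradicting $\rk(G)=k$. The paper instead stays inside the character-function framework and actually invokes the MWS hypothesis: supposing $\M\subseteq\lambda$ for some hyperplane $\lambda$, it takes a codimension-one subspace $\sigma$ of $\lambda$ and observes that each of the $q$ hyperplanes of $\Pi$ through $\sigma$ other than $\lambda$ meets $\M$ exactly in $\M\cap\sigma$, so they all share the character $\ch_\M(\sigma)$; this contradicts the injectivity of $\ch_\M$ on hyperplanes guaranteed by Lemma \ref{lem:GeometricConditions} (the case $k=2$ is handled separately by counting points). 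Your argument is more elementary and more general, and it shows the lemma has nothing to do with the MWS property --- indeed, the paper's own definition of a projective $[n,k,d]_q$-system already demands that not all points lie in a hyperplane, so your observation makes the lemma essentially definitional; what each approach ``buys'' is that yours isolates the true reason (full rank of $G$), while the paper's keeps everything phrased in the language of characters used throughout Section 5. One small quibble: in your second paragraph the conclusion that ``$H$ must arise from the zero linear form'' is loosely worded, since the correct punchline is that a nonzero $v$ with $vG=0$ contradicts $\rk(G)=k$ --- but that is exactly your first argument restated, so there is no gap.
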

\begin{proof}
The result clearly holds for $ k=2 $, where $ \M $ must contain at least $ q>2 $ distinct points. Consider $ k>2 $, and suppose $ \lambda $ is an hyperplane of $ \Pi $ with $ \M\subset \Pi$. Let $ \sigma $ be any $ (k-2) $-flat within $ \lambda $. Through $ \sigma $ there are $ q $ hyperplanes distinct from $ \lambda $, each of which has character $ \ch_{\M}(\sigma) $. Since $ q>2 $ we have a contradiction.  
\end{proof}

\begin{theorem}\label{thm:projectivefamily}
Suppose there exists an $ [n,k]_q $-MWS code with $ n<q^t $. Then there exists an $ [N,k+1]_q $-MWS code with $ N<q^{t+k+1} $.
\end{theorem}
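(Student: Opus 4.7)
The plan is to embed $\Pi := PG(k-1,q)$ as a hyperplane of $\Pi' := PG(k,q)$ with $L := \Pi'\setminus\Pi$, then build the desired projective system by scaling up $\mathcal{M}$ inside $\Pi$ and adjoining a small multiset in $L$. The hyperplanes of $\Pi'$ fall into two families: $\Pi$ itself, and, for each $(k-2)$-flat $F\subset\Pi$, a parallel class of $q$ affine $(k-1)$-flats in $L$ (coming from the $q$ hyperplanes of $\Pi'$ through $F$ other than $\Pi$). Since $\mathcal{M}$ is already MWS, the values $\ch_\mathcal{M}(F)$ on the $(k-2)$-flats of $\Pi$ are pairwise distinct; the only remaining task is to separate, for each such $F$, the $q$ hyperplanes in the parallel class over $F$.

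Concretely I would set $\mu := q^{k+1}$, replace $\mathcal{M}$ by $\mathcal{M}_1 := \mu\cdot\mathcal{M}$ inside $\Pi$, and adjoin a multiset $\mathcal{M}_2\subset L$ with $|\mathcal{M}_2|<q^{k+1}$ having the property that for every $(k-2)$-flat $F$, the $q$ affine hyperplanes of $L$ in the corresponding parallel class receive pairwise distinct $\mathcal{M}_2$-characters. Setting $\mathcal{M}':=\mathcal{M}_1\cup\mathcal{M}_2$, the character on any hyperplane $H\neq\Pi$ of $\Pi'$ is $\ch_{\mathcal{M}'}(H)=\mu\,\ch_\mathcal{M}(F)+\ch_{\mathcal{M}_2}(H\cap L)$ with $F=H\cap\Pi$. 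Two such hyperplanes with the same trace $F$ are separated by $\mathcal{M}_2$, while hyperplanes with different traces $F_1\neq F_2$ are separated since $\mu\,|\ch_\mathcal{M}(F_1)-\ch_\mathcal{M}(F_2)|\geq \mu$ strictly exceeds the range of $\ch_{\mathcal{M}_2}$ (bounded by $|\mathcal{M}_2|<\mu$). The hyperplane $\Pi$ itself has character $\mu n$, which by Lemma~\ref{lem:PSspans} (giving $\ch_\mathcal{M}(F)\leq n-1$) strictly exceeds every other character $\mu\,\ch_\mathcal{M}(F)+\ch_{\mathcal{M}_2}(H\cap L)\leq \mu(n-1)+(q^{k+1}-1)=\mu n-1$. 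The length bound follows: since $n\leq q^t-1$ we get $N=\mu n+|\mathcal{M}_2|\leq q^{k+1}(q^t-1)+(q^{k+1}-1)=q^{t+k+1}-1<q^{t+k+1}$.

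The main obstacle is the construction of the auxiliary $\mathcal{M}_2\subset\F_q^k$ of total mass less than $q^{k+1}$ whose Radon transform in every linear direction is injective on $\F_q$. This is a combinatorial problem depending only on $(q,k)$: one needs a multiplicity vector on $\F_q^k$ for which the $q$ parallel affine hyperplanes of each direction carry $q$ distinct weights. Existence of such a vector in $\mathbb{N}^{q^k}$ (ignoring size) follows from Lemma~\ref{lem:hyp} applied to the finite union of affine hyperplanes encoding the bad coincidences $\ch_{\mathcal{M}_2}(H_{a,b_1})=\ch_{\mathcal{M}_2}(H_{a,b_2})$ with $b_1\neq b_2$; the delicate step is to control the total mass below $q^{k+1}$. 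Alternatively, an explicit construction in the spirit of the triangle-in-$PG(2,q)$ argument of Proposition~\ref{prop:proj3D}, generalised to a $k$-simplex-type configuration in $\Pi'$ with multiplicities built from $q$-adic digits, should yield such an $\mathcal{M}_2$ directly, and I expect this to be where the main technical work lies.
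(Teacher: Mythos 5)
Your architecture is sound and is in essence the paper's own, up to a flipped normalization: the paper keeps $\M$ at its original multiplicities and blows up the \emph{new} points by powers $q^{t},q^{t+1},\dots,q^{t+k-1}$, whereas you blow up $\M$ by $\mu=q^{k+1}$ and keep the new multiset $\M_2$ small. Your reduction is correct as far as it goes: hyperplanes with distinct traces on $\Pi$ are separated by the MWS property of $\M$ together with the factor $\mu$; hyperplanes with a common trace $F$ must be separated by $\M_2$; the hyperplane $\Pi$ itself is handled by Lemma~\ref{lem:PSspans} (which does give $\ch_{\M}(F)\le n-1$); and the arithmetic $N\le q^{k+1}(q^t-1)+(q^{k+1}-1)<q^{t+k+1}$ checks out.

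The genuine gap is that $\M_2$ is never constructed, and all of the content of the theorem lives in the statement you defer: a multiset on the affine part $L$ of total mass $<q^{k+1}$ such that for \emph{every} $(k-2)$-flat $F$ of $\Pi$ the $q$ hyperplanes through $F$ other than $\Pi$ receive pairwise distinct characters. As you yourself observe, Lemma~\ref{lem:hyp} yields \emph{some} admissible multiplicity vector but with no control on its size (it returns a point $(1,t,\dots,t^{m-1})$ with $m=q^k$), so it cannot deliver the bound on $N$; without an explicit $\M_2$ the proposal is a correct reformulation, not a proof. The paper fills precisely this hole with a cone: fix points $T_0,\dots,T_{k-1}$ spanning $\Pi$ and a point $P\in L$, take the lines $\ell_j=\langle P,T_j\rangle$, and assign the $q-1$ points of $\ell_j\setminus\{P,T_j\}$ the multiplicities $q^{j},2q^{j},\dots,(q-1)q^{j}$ (times $q^t$ in the paper's normalization), for a total mass $\tfrac{q(q^k-1)}{2}<q^{k+1}$; every $F$ misses some $T_j$, and the $q$ hyperplanes through $F$ distinct from $\Pi$ meet $\ell_j$ in $q$ distinct points of $\ell_j\setminus\{T_j\}$. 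So your guess at the shape of the construction is the right one, but be aware that the bookkeeping is more delicate than ``each line owns one $q$-adic digit'': the hyperplane $\langle F,P\rangle$ contains the whole of $\ell_{j'}$ whenever $T_{j'}\in F$, contributing $\tfrac{q(q-1)}{2}q^{j'}\ge q^{j'+1}$ and carrying into the digit reserved for $\ell_{j'+1}$. Already for $q=3$, $k=2$ this naive assignment makes the line $\ell_0$ and one further line through $T_0$ collide in character, so the multiplicities (or the spacing of the digit positions, or the weight of $P$) must be adjusted, in the spirit of the case analysis of Proposition~\ref{prop:proj3D}. That adjustment is exactly the ``main technical work'' you postpone, and it is not routine.
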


\begin{proof}
Let $ \M $ be the projective system in $ \Pi=PG(k-1,q) $ associated with the $ [n,k]_q$-MWS code, $ C $. Let $ T=\{ T_0,T_1,\ldots,T_{k-1} \}$ be a set of $ k $ points of $ \Pi $ that are in general position ($ T $ spans $ \Pi $). Note that such a set may always be chosen. Embed $ \Pi $ in $ \Sigma=PG(k+1,q) $, and distinguish a point $ P \in \Sigma^*=\Sigma\setminus\Pi$.\\

\begin{figure}[h]
\centering
\begin{tikzpicture}
%\draw[help lines] (0,0) grid (12,6);

%\node[draw,,thick,minimum width=12cm,minimum height=3.5cm, thick] (c) at(6,1.6){};

\node[trapezium, draw, minimum width=10cm, minimum height=3.5cm,
trapezium left angle=60, trapezium right angle=60] at(6,1.6){};

\node[circle, inner sep=0,minimum size=4cm,outer sep=0,] (c) at(6,2.5){};

\node[fill=blue,circle, text=white, inner sep=0, minimum size=9, label=left:{$ T_0 $}] (T0) at (c.180) {};
\node[fill=blue,circle, text=white, inner sep=0, minimum size=9, label=south west:{$ T_1 $}] (T1) at (c.220) {};
\node[fill=blue,circle, text=white, inner sep=0, minimum size=9, label=south:{$ T_2 $}] (T2) at (c.270) {};
\node[fill=blue,circle, text=white, inner sep=0, minimum size=9, label=south east:{$ T_3 $}] (T3) at (c.300) {};
%\node[fill=blue,circle, text=white, inner sep=0, minimum size=9, label=left:{$ T_0 $}] at (c.300) {2};
\node[fill=blue,circle, text=white, inner sep=0, minimum size=9, label=east:{$ T_{k-1} $}] (Tk) at (c.360) {};

\node[fill=blue,circle, text=white, inner sep=0, minimum size=9, label=west:{$ P $}] (P) at (6,6) {};

\foreach \i in {c.320, c.330, c.340}{
\node[fill=blue,circle, text=white, inner sep=0, minimum size=3] at(\i){};
}

\draw[thick] (P) to (T0);
\draw[thick] (P) to (T1);
\draw[thick] (P) to (T2);
\draw[thick] (P) to (T3);
\draw[thick] (P) to (Tk);

\node[label=west:{$ \Sigma =PG(k,q) $}] (P) at (4,5) {};
\node[label=west:{$ \Pi =PG(k-1,q) $}] (P) at (4.7,-.3) {};
\end{tikzpicture}
\caption{ } \label{fig:ProjInduction}
\end{figure}

For each $ i $,  $ 0\le i \le k-1  $, let $ \ell_i=\langle P,T_i\rangle =\{P,T_i,Q_{i,0},Q_{i,1},\ldots,Q_{i,q-1}\} $. Define the projective system $ \M' $ by letting $ \ch_{M'}(R)=\ch_M(R) $ for each point $ R\in \Pi $, and by letting $ \ch_{\M'}(Q_{i,j})= jq^{t+1} $. We claim that $ \M' $ is a projective system of an MWS code.\\
Suppose $ \lambda$ is an hyperplane, with $ \ch_{\M'}(\lambda)= \ch_{\M'}(\Pi)$. Each hyperplane must meet each $ \ell_i $ in at least one point. For each $ i,j $, we have $ m(Q_{i,j})>n= \ch_{\M'}(\Pi) $.  If $ \lambda\ne \Pi $, then it must be the case that $ \lambda\cap \ell_i=\{P\} $ for $ i=0,1,\ldots,k-1 $, and  therefore $ \ch_{\M'}(\lambda)= \ch_{\M}(\lambda)=n $, contradicting Lemma \ref{lem:PSspans}. It follows that if $ \lambda\ne \Pi $ is an hyperplane of $ \Sigma $, then $ \ch_{\M'}(\lambda)\ne \ch_{\M'}(\Pi) $. It therefore suffices to consider hyperplanes distinct from $ \Pi $.\\
Let $ \lambda_1,\lambda_2 $ be hyperplanes with $ \ch_{\M'}(\lambda_1)=a_0+a_1q+a_2q^2+\cdots+a_{k+t}q^{k+t}$, and $ \ch_{\M'}(\lambda_2)=b_0+b_1q+b_2q^2+\cdots+b_{k+t}q^{k+t}$. If $ \lambda_1\cap \Pi \ne \lambda_2\cap\Pi $, then $ (a_0,a_1,a_2,\ldots,a_{t-1})\ne (b_0,b_1,b_2,\ldots,b_{t-1}) $  ( $ \ch_{\M'}(\lambda_1\cap\Pi)\ne \ch_{\M'}(\lambda_2\cap\Pi)  $ since $ C $ is MWS). It therefore suffices to consider the case that $ \lambda_1\cap\Pi =\lambda_2\cap\Pi $.\\
Let $ \sigma $ be a $ (k-2) $-flat in $ \Pi $. Since the points of $ T $ are in general position, there exists at least one $ j $ with  $ T_j\notin \sigma $. A dimension argument shows the hyperplanes containing $ \sigma $ must meet $ \ell_j $ in mutually distinct points. By considering $ \sigma=\lambda_1 \cap\lambda_2$, it follows that $ b_j\ne a_j $, and whence $ \ch_{\M'}(\lambda_1)\ne \ch_{\M'}(\lambda_2)$.\\
Therefore, $ \M' $ is a projective system of an $ [n,k+1]_q $-MWS code, where 
\[
N=n+\sum_{j=0}^{k-1}\sum_{i=1}^{q-1}iq^{t+j}=n+\frac{q-1}{2}\left[ q^{t+1}+q^{t+2}+\cdots +q^{t+k}\right]<q^{t+k+1}.
\]   
\end{proof}

\begin{corollary}\label{cor:projectiveExistence}
For each $ k\ge 3 $, there exists an $ [n,k]_q $-MWS code with $ n<q^{\frac{k^2+k-4}{2}} $
\end{corollary}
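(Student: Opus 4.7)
The plan is a clean induction on $k$, using Proposition \ref{prop:proj3D} as the base case and Theorem \ref{thm:projectivefamily} as the inductive step. Both results have already done the real work, so the corollary will amount to bookkeeping on the exponent.

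For the base case $k=3$, Proposition \ref{prop:proj3D} yields an $[n,3]_q$ MWS code with
\[
n \le \tfrac{q-1}{2}(q^3+q^2+q) = \tfrac{q^4-q}{2} < q^4,
\]
and $4 = (3^2+3-4)/2$, so the bound $n<q^{(k^2+k-4)/2}$ holds for $k=3$.

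For the inductive step, suppose for some $k\ge 3$ there exists an $[n,k]_q$ MWS code with $n<q^{t}$, where $t=(k^2+k-4)/2$. Apply Theorem \ref{thm:projectivefamily} to produce an $[N,k+1]_q$ MWS code with
\[
N<q^{t+k+1} = q^{(k^2+k-4)/2+k+1} = q^{(k^2+3k-2)/2}.
\]
A direct computation shows $(k+1)^2+(k+1)-4 = k^2+3k-2$, so $N<q^{((k+1)^2+(k+1)-4)/2}$, completing the induction.

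The only step requiring any attention at all is confirming that the exponent recursion $t\mapsto t+k+1$ is consistent with the stated closed form $(k^2+k-4)/2$; this is the elementary identity $(k+1)^2+(k+1) = k^2+k + 2(k+1)$, which makes the recursion match exactly. There is no real obstacle, since Proposition \ref{prop:proj3D} provides the required short $k=3$ seed and Theorem \ref{thm:projectivefamily} does the lifting; the corollary is just iterating the lifting.
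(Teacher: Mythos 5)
Your proof is correct and follows essentially the same route as the paper: the base case $k=3$ from Proposition \ref{prop:proj3D} and repeated application of Theorem \ref{thm:projectivefamily}, with the paper writing the exponent as the telescoping sum $4+(4+5+\cdots+k)=\frac{k(k+1)}{2}-2$ where you phrase the same computation as a formal induction. The arithmetic checks out in both formulations.
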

\begin{proof}
For $ k=3 $ the result follows from Proposition  \ref{prop:proj3D}. From the Theorem \ref{thm:projectivefamily}, and  Proposition \ref{prop:proj3D} we have for $ k>3 $ there exists an $ [n,k]_q $-MWS code with   
\[
n<q^{4+(4+5+\cdots+k)}=q^{\frac{k(k+1)}{2}-2}
\]  
\end{proof}

\begin{remark}
The bound in Lemma \ref{lem:geomLBlength} is rather optimistic in the following sense: The only way a code can achieve this bound is if there is an hyperplane of every character, from $ 0 $ to $ \theta_q(k-1)-1 $.  For $ k>2 $, constructions for codes meeting the bounds in Lemma \ref{lem:geomLBlength}, even asymptotically, seem quite elusive.  	
\end{remark}

\subsection{Length of the codes arising from the algebraic construction}
In order to give a partial answer to the question above about the length of MWS codes, we will try to estimate the length $n^{(k)}$ of the code $\C_k$ obtained using the construction given in Section \ref{sec:alg}. We will show indeed that asymptotically the algebraic construction provides codes of considerably smaller length with respect to those given in Theorem \ref{thm:LongGeomExamples}. For this purpose we need to understand how the length increases after each of the two partial steps.

We will denote by $\C_k$ the $k$-dimensional MWS code, by $n^{(k)}$ its length and by $r^{(k)} \in \N^{q-1}$ the vector that minimize the quantity $R^{(k)}=r_1^{(k)}+\ldots+r_{q-1}^{(k)}$, obtained in Theorem \ref{thm:cr}. We want to find some recurrence relation or upper bound for $n^{(k)}$, and this certainly involves also $R^{(k)}$ and  $T^{(k)}=\max\{c^r[\beta] \mid c^r\in \C_k(r^{(k)})\setminus \{0\}, \beta \in \F_q^*\}$. Since we want to minimize the length of our MWS codes, at each step the vector $r^{(k)}$ is chosen as a vector that minimizes $R^{(k)}$ among all the vectors that satisfy both  (\ref{eq:inters}) and  (\ref{eq:2}).
%Concerning the step given by Proposition \ref{prop:recursive}, we can see that the construction proposed for $\C_{k+1}$ works also adding the vector 
%$$(\underbrace{1,\ldots,1}_{n^{(k)} \text{ times }},\underbrace{\alpha,\ldots,\alpha}_{n^{(k)}+1 \text{ times }},\underbrace{\alpha^2,\ldots,\alpha^2}_{n^{(k)}+1 +(T^{(k)}+1) \text{ times }},\ldots,\underbrace{\alpha^{q-2},\ldots,\alpha^{q-2}}_{n^{(k)}+1+(q-3)(T^{(k)}+1) \text{ times }})$$
%where $T^{(k)}:=\max\{c^e[\beta] \mid c^e\in \C_k(r^{(k)})\setminus \{0\}, \beta \in \F_q^*\}$.
%At this point the question transforms into finding an upper bound for $T^{(k)}$. 
%
%Let $r^{(k)}=(r_1^{(k)},\ldots,r_{q-1}^{(k)})$ be the vector found with the  step of the iterative construction given by Theorem \ref{thm:cr} and let $R^{(k)}=r_1^{(k)}+\ldots+r_{q-1}^{(k)}$.
% Let moreover $S^{(k)}=\max \{c[\beta] \mid c\in \C_k, \beta \in \F_q^*\}$. Then 
%$$ T^{(k)}\leq R^{(k)}S^{(k)}.$$
%For the code $\C_1$, we have $S^{(1)}=q-1$, while in general, 

%$$S^{(k+1)}=n^{(k)}+1+(q-3)T^{(k)}+T^{(k)}=n^{(k)}+1+(q-2)T^{(k)}$$
%i.e.
%\begin{equation}
%T^{(k+1)}\leq R^{(k+1)}\left(n^{(k)}+1+(q-2)T^{(k)}\right)
%\end{equation}
Observe that we start with $\C_1=\F_q$ of length $n^{(1)}=1$, $r^{(1)}=(2,3,\ldots,q-1, 1)$, $R^{(1)}=\frac{q(q-1)}{2}$ and $\C_1(r^{(1)})$ is the code generated by the codeword
$$c=(1,\alpha, \alpha, \alpha^2, \alpha^2, \alpha^2, \ldots, \underbrace{\alpha^{q-2},\ldots,\alpha^{q-2}}_{q-1 \text{ times }}). $$

%For the parameter $T^{(k)}$, we have $T^{(1)}=q-1$, and in general it satisfies (I have to check more carefully)

\begin{proposition}
The sequences of integers $n^{(k)}$ and $T^{(k)}$  satisfy the following recurrence relations.
\begin{equation}\label{eq:Tk}
T^{(k+1)}\leq \sum_{j=0}^{q-2}r_{i_j} (n^{(k)}+j (T^{(k)}+1))=R^{(k)}n^{(k)}+(T^{(k)}+1)\sum_{j=0}^{q-2}j r_{i_j},
\end{equation}
where we have reordered the $r_i$'s such that $r_{i_0} \leq r_{i_1} \leq \ldots \leq r_{i_{q-2}}$, and

\begin{equation}\label{eq:nk}
n^{(k+1)}=(q-1)R^{(k)}n^{(k)}+1+\frac{(q-3)(q-2)}{2}(T^{(k)}+1),
\end{equation}
with initial conditions $n^{(1)}=1$ and $T^{(1)}=q-1$.
\end{proposition}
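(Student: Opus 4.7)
Both recurrences come from unpacking the inductive step of Theorem \ref{thm:algexist}. Recall that $\C_{k+1}$ is built from $\C_k$ in two substeps: first pass to the generalized $r^{(k)}$-repetition code $\C_k(r^{(k)})$ via Theorem \ref{thm:cr} (which has length $R^{(k)}n^{(k)}$, satisfies (\ref{propertyA}) and (\ref{propertyB}), and whose largest entry-count is $T^{(k)}$ by definition); then apply part 2 of Proposition \ref{prop:recursive} to adjoin the structured codeword $x$, producing $\C_{k+1}$ with property (\ref{propertyB}).

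For the length recurrence (\ref{eq:nk}), the formula is essentially immediate. One substitutes $n \mapsto R^{(k)}n^{(k)}$ and $T \mapsto T^{(k)}$ into the length expression $N = (q-1)n + (q-2) + (T+1)\tfrac{(q-2)(q-3)}{2}$ given in Proposition \ref{prop:recursive}(2), with the starting code taken to be $\C_k(r^{(k)})$. The initial conditions $n^{(1)}=1$ and $T^{(1)}=q-1$ are then read off directly from $\C_1 = \F_q$ with the chosen $r^{(1)}=(2,3,\ldots,q-1,1)$: the single codeword generating $\C_1(r^{(1)})$ has entry distribution $(1,2,3,\ldots,q-1)$ over $\F_q^*$, whose maximum is $q-1$.

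For the bound (\ref{eq:Tk}) on $T^{(k+1)}$, the key identity is Proposition \ref{prop:Cr}(5): for any $\tilde c \in \C_{k+1}$ and any $i$,
\[
\tilde c^r[\alpha^i] \;=\; \sum_{j=1}^{q-1} r^{(k+1)}_j \,\tilde c[\alpha^{i-j}],
\]
so each entry of $\C_{k+1}(r^{(k+1)})$ is a weighted sum of the $q-1$ nonzero-index entries $\{\tilde c[\alpha^s]\}_{s=0}^{q-2}$, with weights forming a cyclic permutation of $r^{(k+1)}$. The strategy has two parts. First, I bound the individual entries $\tilde c[\alpha^s]$ for $\tilde c \in \C_{k+1}$: writing $\tilde c = \phi(c^r) + \alpha^j x$ for some $c^r \in \C_k(r^{(k)})$ and some shift $j$, and using the block structure of $x$ (blocks of sizes $R^{(k)}n^{(k)},\; R^{(k)}n^{(k)}+1,\; R^{(k)}n^{(k)}+1+(T^{(k)}+1),\ldots$), a direct computation together with part 3 of Proposition \ref{prop:Vc} yields
\[
\tilde c[\alpha^j]=c^r[0],\qquad \tilde c[\alpha^{j+s}] = c^r\!\left[\alpha^j(\alpha^s-1)\right] + R^{(k)}n^{(k)} + 1 + (s-1)(T^{(k)}+1)
\]
for $s=1,\ldots,q-2$. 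Using $c^r[0]\le R^{(k)}n^{(k)}$ and $c^r[\cdot]\le T^{(k)}$, the $q-1$ values $\tilde c[\alpha^s]$ can be sorted nondecreasingly so that the $j$-th value is bounded above by a quantity of the form appearing in (\ref{eq:Tk}). Second, I apply the rearrangement inequality: since the weights in $\tilde c^r[\alpha^i]$ are a permutation of $r^{(k+1)}$ (equivalently, after the discussion in Remark \ref{rem:all}, one may track them as a reordering of the $r_{i_j}$), pairing the smallest weight with the smallest bound maximizes the weighted sum, yielding the stated inequality (\ref{eq:Tk}), and the right-hand equality is obtained by distributing.

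The main obstacle is carrying out step~1 cleanly: one must check that the bounds on the $q-1$ entries really line up in the pattern $n^{(k)} + j(T^{(k)}+1)$ for $j=0,\ldots,q-2$ (up to the adjustment factor involving $R^{(k)}$), so that the rearrangement pairing produces exactly the telescoping sum stated. The rest is bookkeeping on indices and applying the already-established properties (\ref{propertyA}) and (\ref{propertyB}) of $\C_k(r^{(k)})$ to ensure all the bounds are attained by a well-defined codeword.
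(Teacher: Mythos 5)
Your proposal follows essentially the same route as the paper's proof: equation (\ref{eq:nk}) by direct substitution of the length $R^{(k)}n^{(k)}$ of $\C_k(r^{(k)})$ into the formula of Proposition \ref{prop:recursive}(2), and equation (\ref{eq:Tk}) by bounding the sorted entry-counts $c[\alpha^{\ell_j}]$ of codewords of $\C_{k+1}$ via the block structure of $x$ and then pairing them with the sorted $r_{i_j}$'s through the convolution identity of Proposition \ref{prop:Cr}(5) and the rearrangement inequality. Your explicit computation of $\tilde c[\alpha^{j+s}]$ is in fact slightly more careful than the paper's (which writes the additive constant simply as $n$), but the argument is the same.
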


\begin{proof}
By Theorem $\ref{thm:cr}$, $\C_k(r^{(k)})$ has length $R^{(k)}n^{(k)}$, and then by part 2 of Proposition \ref{prop:recursive} we obtain $n^{(k+1)}=(q-1)R^{(k)}n^{(k)}+1+\frac{(q-3)(q-2)}{2}(T^{(k)}+1)$.

For the parameter $T^{(k+1)}$ we consider what happens when we go from $\C_k(r^{(k)})$ to $\C_{k+1}(r^{(k+1)})$. Let $T^{(k)}=\max\{c^r[\beta] \mid c^r\in \C_k(r^{(k)})\setminus \{0\}, \beta \in \F_q^*\}$. After applying the construction of part 2 of Proposition \ref{prop:recursive}, we get a code $\C_{k+1}$ and now consider a codeword $c\in \C_{k+1}$. If we sort in increasing order the values $c[\alpha^i]$, we get
$$c[\alpha^{\ell_0}]<c[\alpha^{\ell_1}]<\ldots <c[\alpha^{\ell_{q-1}}],$$
and it is easy to check that for every $j=1,\ldots,q-1$, we have that 
\begin{equation}\label{eq:ca}
c[\alpha^{\ell_j}]\leq n+1+(j-1)(T^{(k)}+1)+ T^{(k)}.
\end{equation}
At this point we construct the code $\C_{k+1}(r^{(k+1)})$. In this code, by part 5 of Proposition \ref{prop:Cr}, the entries of $V(c^r)$ for $c^r\in \C_{k+1}(r^{(k+1)})$ are given by 
$$c^r[\alpha^i]=\sum_{j=1}^{q-1}r_jc[\alpha^{i-j}].$$
Reordering the $r_i$'s such that $r_{i_0} \leq r_{i_1} \leq \ldots \leq r_{i_{q-2}}$, we get
$$c^r[\alpha^i]=\sum_{j=1}^{q-1}r_jc[\alpha^{i-j}]\leq \sum_{j=1}^{q-1}r_{i_j}c[\alpha^{\ell_j}].$$
Combining this inequality with (\ref{eq:ca}), we get the desired result.
\end{proof}

Now it only remains to determine an upper bound for $R^{(k)}$.
One possible strategy to find an estimate on $R^{(k)}$ is the following and relies on Schwartz-Zippel Lemma \cite{sc80}.
Let $H$ be the union of $D^{(k)}$  distinct linear hyperplanes obtained after the $k$-th step of the algebraic construction. A translation of the  Schwartz Zippel Lemma \cite[Corollary 1]{sc80} states that
\begin{equation}\label{eq:SZ}
|\left\{x=(x_1,\ldots, x_{q-1}) \mid x \in H, x_i\in \{0,\ldots, m-1\}, x\neq 0\right\}|\leq D^{(k)} (m^{q-2}-1).
\end{equation}
We can also find an upper bound on $D^{(k)}$.
\begin{lemma}\label{lem:boundDk}
Let $D^{(k)}$ denote the number of distinct linear hyperplanes obtained at the $k$-th step of the iterative construction.
$$D^{(k)}\leq \binom{q^k}{2}+(q^k-1)(q-1) $$
\end{lemma}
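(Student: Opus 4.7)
The plan is to separately bound the contributions of the two unions $X_1$ and $X_2$ from the proof of Theorem \ref{thm:cr}, the whole point being that most of $X_2$ is absorbed into $X_1$ and only a small residual survives. Recall that $X_1 = \bigcup_{a\neq b\in \C_k} H_{i_0}^{a,b}$ for any fixed $i_0$ (by Remark \ref{rem:all} this choice is immaterial), while $X_2=\bigcup_{c\in\C_k\setminus\{0\}}\bigcup_{0\le i<\ell\le q-1} H_{i,\ell}^c$. The counting is then a straightforward application of $|X_1\cup X_2|\le |X_1|+|X_2\setminus X_1|$.

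First I would bound $|X_1|$: since $H_{i_0}^{a,b}$ is the orthogonal complement of $V(a)-V(b)$ restricted to its first $q-1$ coordinates, it is unchanged under swapping $a$ and $b$, so the number of distinct hyperplanes in $X_1$ is at most the number of unordered pairs of codewords, i.e.\ $\binom{q^k}{2}$.

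The main step is to show that the contribution of $X_2$ beyond $X_1$ is only the $i=0$ part. Fix $c\in \C_k\setminus\{0\}$ and $1\le i<\ell\le q-1$, and set $a:=\alpha^{i_0-i}c$, $b:=\alpha^{i_0-\ell}c$. Both are codewords of $\C_k$, and they are distinct because $c\ne 0$ and $i_0-i\not\equiv i_0-\ell\pmod{q-1}$ (as $1\le \ell-i\le q-2$). Using Proposition \ref{prop:Vc}(1) in the form $(\alpha^tc)[\alpha^u]=c[\alpha^{u-t}]$ one computes
\[
a[\alpha^{i_0-j}]-b[\alpha^{i_0-j}]=c[\alpha^{i-j}]-c[\alpha^{\ell-j}], \qquad j=1,\ldots,q-1,
\]
so the defining vectors of $H_{i_0}^{a,b}$ and $H_{i,\ell}^c$ are literally equal, hence the two hyperplanes coincide. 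Consequently every $H_{i,\ell}^c$ with $i\ge 1$ lies in $X_1$, and the only genuinely new hyperplanes contributed by $X_2$ are those with $i=0$, namely $H_{0,\ell}^c$ for $c\in \C_k\setminus\{0\}$ and $\ell\in\{1,\ldots,q-1\}$. This contributes at most $(q^k-1)(q-1)$ hyperplanes.

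Adding the two estimates yields $D^{(k)}\le \binom{q^k}{2}+(q^k-1)(q-1)$. The only real obstacle is the bookkeeping identity $H_{i,\ell}^c=H_{i_0}^{\alpha^{i_0-i}c,\,\alpha^{i_0-\ell}c}$; once that is verified, the bound reduces to counting unordered pairs and counting pairs $(c,\ell)$, both of which are immediate. Note that we do not attempt to absorb the $i=0$ hyperplanes into $X_1$, since $H_{0,\ell}^c$ would require a codeword $b$ with $b[\alpha^t]=c[0]$ for every $t\in\{1,\ldots,q-1\}$, a condition that need not be met in $\C_k$; hence these genuinely need to be counted separately.
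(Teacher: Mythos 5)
Your proposal is correct and follows essentially the same route as the paper: count the $H_i^{a,b}$ over unordered pairs to get $\binom{q^k}{2}$, absorb each $H_{i,\ell}^c$ with $i\ge 1$ into that family via the identity expressing it as some $H_{i_0}^{\lambda c,\mu c}$, and add the $(q^k-1)(q-1)$ hyperplanes $H_{0,\ell}^c$ separately. Your verification of the absorption identity is in fact slightly more careful than the paper's one-line statement (which writes $H_{i,\ell}^c=H_i^{c,\alpha^{\ell-i}c}$, where the exponent should be $i-\ell$), but the argument is the same.
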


\begin{proof}
 There are  $\binom{q^k}{2}$  hyperplanes of the form $H_i^{a,b}$ with $a,b \in \C_k$ and $a\neq b$. Moreover, for $1\leq i <\ell$, we have that the linear hyperplanes $H_{i,\ell}^c=H_i^{c,\alpha^{\ell-i}c}$. Therefore, we already counted them. Finally, the hyperplanes of the form $H_{0,\ell}^c$, for $1\leq \ell \leq q-1$, $c \in \C_k\setminus \{0\}$, are exactly $(q^k-1)(q-1)$ many.
\end{proof}

\begin{corollary}\label{cor:ORk}
For $k\ge 2$, there exists an $r^{(k)}\in \N^{q-1}$ satisfying Theorem \ref{thm:cr} such that $$R^{(k)}\leq (q-1)\binom{q^k}{2}+(q^k-1)(q-1)^2=\bo(q^{2k+1}).$$
\end{corollary}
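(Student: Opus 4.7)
The statement is essentially a pigeonhole argument combining the Schwartz--Zippel estimate (\ref{eq:SZ}) with the upper bound on $D^{(k)}$ provided by Lemma \ref{lem:boundDk}. The plan is to locate a non-zero lattice point in a suitable axis-aligned box $\{0,1,\ldots,m-1\}^{q-1}$ that avoids the whole union $X_1 \cup X_2$ of the hyperplanes arising in the proof of Theorem~\ref{thm:cr}, and then to choose $m$ as small as possible so that $R^{(k)}\le (q-1)(m-1)$.

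First I would recall that by Theorem \ref{thm:cr} and Lemmas \ref{lem:hyp}, \ref{lem:exist2}, the vector $r^{(k)}$ we seek must lie outside a finite union $H$ of $D^{(k)}$ linear hyperplanes in $\Q^{q-1}$. Lemma \ref{lem:boundDk} bounds $D^{(k)} \le \binom{q^k}{2}+(q^k-1)(q-1)$. Applying the version of Schwartz--Zippel stated in (\ref{eq:SZ}) to $H$, the number of non-zero lattice points of $\{0,\ldots,m-1\}^{q-1}$ lying in $H$ is at most $D^{(k)}(m^{q-2}-1)$, while the total number of such non-zero lattice points is $m^{q-1}-1$.

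Second, I would check that choosing $m=D^{(k)}+1$ already guarantees existence of a lattice point outside $H$: indeed
\[
m^{q-1}-1 \;-\; D^{(k)}(m^{q-2}-1) \;=\; (m-D^{(k)})\,m^{q-2} + D^{(k)} - 1 \;\ge\; 1,
\]
so there is $r^{(k)}\in\{0,\ldots,m-1\}^{q-1}\setminus\{0\}$ avoiding every hyperplane. For such an $r^{(k)}$, each coordinate is at most $m-1=D^{(k)}$, so $R^{(k)} \le (q-1)(m-1)=(q-1)D^{(k)}$. Plugging in the bound from Lemma \ref{lem:boundDk} yields
\[
R^{(k)} \;\le\; (q-1)\binom{q^k}{2}+(q^k-1)(q-1)^2,
\]
exactly as claimed. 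The asymptotic $\bo(q^{2k+1})$ follows at once from $\binom{q^k}{2}=\tfrac{1}{2}q^{2k}(1-q^{-k})$.

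The only slightly delicate point is making sure that Schwartz--Zippel is applied in the right form: the hyperplanes in $X_1\cup X_2$ are defined over $\Z$ (their normals have integer coordinates coming from differences of entries distribution vectors), hence they intersect $\N^{q-1}$ in lattice zero sets of non-zero linear polynomials, to which (\ref{eq:SZ}) directly applies. Everything else in the argument is counting, so I expect no real obstacle.
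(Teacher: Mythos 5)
Your argument is correct and essentially identical to the paper's: both combine the Schwartz--Zippel count (\ref{eq:SZ}) with Lemma \ref{lem:boundDk} to find a non-zero lattice point in a box of side roughly $D^{(k)}$ avoiding the union of hyperplanes, and then bound $R^{(k)}$ by $(q-1)$ times the maximal coordinate. The only difference is that you take $m=D^{(k)}+1$ where the paper takes $m=D^{(k)}$; your choice is marginally more careful and yields the same final bound $R^{(k)}\le (q-1)D^{(k)}$.
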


\begin{proof}
 Equation (\ref{eq:SZ}) with $m=D^{(k)}$ implies that there exists a non-zero $r^{(k)}=(r_1^{(k)},\ldots,r_{q-1}^{(k)})\in \N^{q-1}$ that is not in the union of the $D^{(k)}$ hyperplanes, such that each of the $r_i^{(k)}$ belongs to $\{0,1,\ldots,D^{(k)}-1\}$. This implies that 
$$R^{(k)}\leq (q-1)D^{(k)}\leq(q-1)\binom{q^k}{2}+(q^k-1)(q-1)^2,$$
where the last inequality follows from Lemma \ref{lem:boundDk}.
\end{proof}
%
%\begin{lemma}
%\begin{enumerate}
%\item $$ |\{(x_1,\ldots, x_{q-1}) \in \N^{q-1} \mid x_1+\ldots+ x_{q-1}=s\}| = \binom{s+q-2}{s}=\binom{s+q-2}{q-2} $$
%\item $$ \sum_{s=0}^r\binom{s+q-2}{q-2}=\binom{r+q-1}{q-1} $$
%\end{enumerate}
%\end{lemma}
Now we estimate the asymptotic order of the parameters $n^{(k)}$ and $T^{(k)}$.

\begin{proposition}\label{prop:bigo}
 For $k\ge 1$ we have
$$n^{(k+1)}=\bo(q^{(k+1)(k+2)-3}) \quad \mbox{ and } \quad T^{(k+1)}=\bo(q^{k(k+3)}).$$
\end{proposition}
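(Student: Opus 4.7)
The plan is to establish both asymptotic estimates simultaneously by induction on $k$, feeding the recurrences (\ref{eq:Tk}) and (\ref{eq:nk}) with the upper bound $R^{(k)} = \bo(q^{2k+1})$ furnished by Corollary \ref{cor:ORk}. First I would simplify (\ref{eq:Tk}) using the trivial estimate $\sum_{j=0}^{q-2} j\, r_{i_j} \le (q-2) R^{(k)}$, obtaining
$$ T^{(k+1)} \le R^{(k)} n^{(k)} + (q-2) R^{(k)} (T^{(k)}+1) = \bo\bigl(q\, R^{(k)}\, T^{(k)}\bigr) + \bo\bigl(R^{(k)} n^{(k)}\bigr),$$
while (\ref{eq:nk}) becomes $ n^{(k+1)} = \bo\bigl(q\, R^{(k)}\, n^{(k)}\bigr) + \bo\bigl(q^2 T^{(k)}\bigr).$

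For the base case $k=1$, one uses $n^{(1)}=1$, $T^{(1)}=q-1$ and $R^{(1)}=q(q-1)/2 = \bo(q^2)$ to check directly that $n^{(2)} = \bo(q^3) = \bo(q^{2\cdot 3 - 3})$ and $T^{(2)} = \bo(q^4) = \bo(q^{1\cdot 4})$, as required.

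For the inductive step, assume the two bounds hold at index $k+1$, and use $R^{(k+1)} = \bo(q^{2k+3})$. I would compute the candidate exponents for each of $n^{(k+2)}$ and $T^{(k+2)}$ and identify the dominant one. For $n^{(k+2)}$, the term $q R^{(k+1)} n^{(k+1)}$ contributes exponent $1 + (2k+3) + ((k+1)(k+2)-3) = k^{2}+5k+3 = (k+2)(k+3)-3$, which strictly dominates the contribution $2 + k(k+3) = k^{2}+3k+2$ coming from $q^2 T^{(k+1)}$. For $T^{(k+2)}$, the term $q R^{(k+1)} T^{(k+1)}$ contributes exponent $1 + (2k+3) + k(k+3) = k^{2}+5k+4 = (k+1)(k+4)$, which dominates $(2k+3) + ((k+1)(k+2)-3) = k^{2}+5k+2$ from $R^{(k+1)} n^{(k+1)}$. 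In both cases the dominant exponent agrees on the nose with the target, so the induction closes.

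There is no serious obstacle here: the argument is essentially exponent bookkeeping. The only items requiring mild care are (i) absorbing the linear-in-$q$ factors (such as the $(q-2)$ in the bound on $\sum j\, r_{i_j}$) correctly into the $\bo$-notation with the appropriate power of $q$, and (ii) confirming that the polynomials $(k+2)(k+3)-3$ and $(k+1)(k+4)$ match exactly the exponents produced by the dominant terms at each step — which the computation above shows they do.
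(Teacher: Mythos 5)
Your proposal is correct and follows essentially the same route as the paper: induction on the recurrences (\ref{eq:Tk}) and (\ref{eq:nk}), using $R^{(k)}=\bo(q^{2k+1})$ from Corollary \ref{cor:ORk} and the crude bound $\sum_{j=0}^{q-2} j\,r_{i_j} \le (q-2)R^{(k)}=\bo(q^{2k+2})$, followed by the same exponent bookkeeping (your step runs from index $k+1$ to $k+2$ rather than $k$ to $k+1$, which is immaterial). The dominant-term identifications and the final exponents $(k+2)(k+3)-3$ and $(k+1)(k+4)$ all check out against the paper's computation.
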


\begin{proof}
 We prove it by induction. Recall that $n^{(1)}=1$, $T^{(1)}=q-1$ and $R^{(1)}=\frac{q(q-1)}{2}$. For $k=1$, we can compute 
\begin{align*}
T^{(2)}&\leq   R^{(1)}n^{(1)}+(T^{(1)}+1)\sum_{j=0}^{q-2}j r_{i_j}\\
    & =\bo(q^2)+\bo(q^4)=\bo(q^4),
\end{align*}
\begin{align*}
n^{(2)}&=(q-1)R^{(1)}n^{(1)}+1+\frac{(q-3)(q-2)}{2}(T^{(1)}+1)\\
&=\bo(q^{3})+\bo(q^3)=\bo(q^3).
\end{align*}

Suppose that the result holds for $n^{(k)}$ and $T^{(k)}$. By (\ref{eq:Tk}) we have 
\begin{align*}
T^{(k+1)} &\leq R^{(k)}n^{(k)}+(T^{(k)}+1)\sum_{j=0}^{q-2}j r_{i_j} \\
&=\bo(q^{2k+1+ k(k+1)-3})+\bo(q^{(k-1)(k+2)+2k+2})\\
&=\bo(q^{k(k+3)-2})+\bo(q^{k(k+3)})=\bo(q^{k(k+3)}).
\end{align*}
Moreover, by (\ref{eq:nk}), it holds 
\begin{align*}
n^{(k+1)}&=(q-1)R^{(k)}n^{(k)}+1+\frac{(q-3)(q-2)}{2}(T^{(k)}+1)\\
&=\bo(q^{2k+2+k(k+1)-3})+\bo(q^{(k-1)(k+2)+2})\\
&=\bo(q^{(k+1)(k+2)-3}).
\end{align*}
\end{proof}

At this point we can conclude our estimate of the length of MWS codes obtained via the algebraic construction. 

\begin{theorem}\label{thm:OMWS}
Using the construction given in Section \ref{sec:alg} we can obtain an $[n,k]_q$  MWS codes with length $n=\bo(q^{k(k+1)-4})$.
\end{theorem}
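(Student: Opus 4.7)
The plan is to exploit a mismatch between what the inductive construction of Theorem \ref{thm:algexist} actually accomplishes and what the statement we are trying to prove requires. At every step of that induction we preserve property (\ref{propertyB}) so that the next step can be carried out; this forces us through part~2 of Proposition \ref{prop:recursive}, which is costly. But at the \emph{final} step we only need the resulting code to be MWS of dimension $k$, not to satisfy (\ref{propertyB}). So I would use the cheaper part~1 of Proposition \ref{prop:recursive} at the last step, which merely needs (\ref{propertyA}) and only doubles the length.

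Concretely, first run the construction of Theorem \ref{thm:algexist} up to dimension $k-1$ to obtain an MWS code $\C_{k-1}$ of length $n^{(k-1)}$ satisfying (\ref{propertyB}). By Proposition \ref{prop:bigo} applied with $k$ replaced by $k-1$, $n^{(k-1)} = \bo(q^{(k-1)k-3})$.

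Next, apply Theorem \ref{thm:cr} to $\C_{k-1}$ with a vector $r^{(k-1)} \in \N^{q-1}$ chosen via Corollary \ref{cor:ORk}, so that $R^{(k-1)} = \bo(q^{2k-1})$. This produces a code $\C_{k-1}(r^{(k-1)})$ of length $R^{(k-1)} n^{(k-1)} = \bo(q^{2k-1} \cdot q^{k^2-k-3}) = \bo(q^{k^2+k-4})$ which is MWS and satisfies both (\ref{propertyA}) and (\ref{propertyB}).

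Finally, because (\ref{propertyA}) holds, part~1 of Proposition \ref{prop:recursive} applies and yields an $[N,k]_q$ MWS code with $N = 2R^{(k-1)} n^{(k-1)} + 1 = \bo(q^{k(k+1)-4})$. There is no real obstacle: the only care needed is to note that substituting part~1 for part~2 at the last iteration is legitimate precisely because we are no longer required to carry (\ref{propertyB}) forward, and this substitution saves exactly the factor of $q$ separating the bound $\bo(q^{k(k+1)-3})$ implicit in Proposition \ref{prop:bigo} from the target bound $\bo(q^{k(k+1)-4})$.
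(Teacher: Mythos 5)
Your proposal is correct and follows essentially the same route as the paper: build $\C_{k-1}$ with property (\ref{propertyB}), pass to $\C_{k-1}(r^{(k-1)})$ via Theorem \ref{thm:cr} and Corollary \ref{cor:ORk}, and finish with the cheaper part~1 of Proposition \ref{prop:recursive} since (\ref{propertyB}) need not be carried past the last step. The arithmetic $2R^{(k-1)}n^{(k-1)}+1=\bo(q^{2k-1+k(k-1)-3})=\bo(q^{k(k+1)-4})$ matches the paper exactly.
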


\begin{proof}
Using the proof of Theorem \ref{thm:algexist} we construct an $[n^{(k-1)}, k-1]_q$ MWS code $\C_{k-1}$ satisfying (\ref{propertyB}), with $n^{(k-1)}=\bo(q^{(k-1)k-3})$ by Proposition \ref{prop:bigo}. By Theorem \ref{thm:cr} there exists $r^{(k-1)} \in \N^{q-1}$ such that $\C_{k-1}(r^{(k-1)})$ is an $[R^{(k-1)}n^{(k-1)}]_q$ MWS code satisfying (\ref{propertyA}) and (\ref{propertyB}). Therefore, by part 1 of Proposition \ref{prop:recursive}, there exists a $[2R^{(k-1)}n^{(k-1)}+1,k]_q$ MWS code. Since by Corollary \ref{cor:ORk} $R^{(k-1)}=\bo(q^{2k-1})$, we finally obtain
$$2R^{(k-1)}n^{(k-1)}+1=\bo(q^{2k-1+k(k-1)-3})=\bo(q^{k(k+1)-4}). $$
\end{proof}

\begin{remark}
Observe that the asymptotic estimate of the length of an MWS given in Theorem \ref{thm:OMWS} is done considering the worst-case scenario, and therefore the algebraic construction could give a shorter code in practice. Indeed it could be possible to improve the asymptotic estimate of $R^{(k)}$ that we computed with a worst-case argument. 
\end{remark}

\section{Conclusions and open questions}
In this paper we have introduced the concept of $ [n,k]_q $ maximum weight spectrum (MWS) codes.  We provided two different constructions for MWS codes, showing that they exist for all dimensions, and over every finite field. In Lemma \ref{lem:geomLBlength} we provide a lower bound on the length of MWS codes, which is shown to be sharp for $k=2$. The infinite families of $ [n,k]_q $-MWS codes provided here have lengths  $n\ge \bo(q^\frac{k^2+k-4}{2})$.  This raises the natural question: How short may an  MWS code be?  In other words, the general problem for fixed $ k $, and $ q $, is to determine the least value $ n $ such that an $ [n,k]_q $-MWS code exists. It is with some trepidation that we refer to such codes as optimal MWS codes (the term optimal is over used, but seems most appropriate here).  Rather than pose a conjecture on the length of optimal MWS codes, the authors invite investigation into the existence of infinite families of $ [n,k]_q $-MWS codes with $\bo(q^k)\leq n \leq \bo(q^{\frac{k^2+k-4}{2}})$. 
\section*{Acknowledgement} 
The authors would like to thank Patrick Sol\'e for his helpful communications, and in particular for introducing the authors of this manuscript to each other.

\bibliographystyle{abbrv}
\bibliography{MWS}

\end{document}